\newcommand{\bra}[1]{\ensuremath{\langle #1 |}}
\newcommand{\ket}[1]{\ensuremath{| #1 \rangle}}
\newcommand{\bk}[2]{\ensuremath{\langle #1 | #2 \rangle}}
\newcommand{\kb}[2]{\ensuremath{| #1 \rangle\!\langle #2 |}}
\newtheorem{theo}{Theorem}[section]
\newtheorem{prop}{Proposition}[section]
\newtheorem{cor}{Corollary}[section]
\theoremstyle{definition}
\newtheorem{example}{Example}[section]
\newtheorem{definition}{Definition}[section]
\newtheorem{remark}{Remark}[section]
\newcommand{\we}{\wedge}
\newcommand{\ot}{\otimes}
\newcommand{\ti}{\times}
\newcommand{\cD}{{\cal D}}
\newcommand{\cS}{{\cal S}}
\newcommand{\cP}{{\cal P}}
\newcommand{\cE}{{\cal E}}
\newcommand{\cH}{{\cal H}}
\newcommand{\la}{\langle}
\newcommand{\ra}{\rangle}
\newcommand{\raa}{\rightarrow}
\newcommand{\R}{{\mathbf R}}
\newcommand{\C}{{\mathbb C}}
\newcommand{\wt}{\widetilde}
\mathchardef\za="710B  
\mathchardef\zb="710C  
\mathchardef\zg="710D  
\mathchardef\zd="710E  
\mathchardef\zve="710F 
\mathchardef\zz="7110  
\mathchardef\zh="7111  
\mathchardef\zvy="7112 
\mathchardef\zi="7113  
\mathchardef\zk="7114  
\mathchardef\zl="7115  
\mathchardef\zm="7116  
\mathchardef\zn="7117  
\mathchardef\zx="7118  
\mathchardef\zp="7119  
\mathchardef\zr="711A  
\mathchardef\zs="711B  
\mathchardef\zt="711C  
\mathchardef\zu="711D  
\mathchardef\zvf="711E 
\mathchardef\zq="711F  
\mathchardef\zc="7120  
\mathchardef\zw="7121  
\mathchardef\ze="7122  
\mathchardef\zy="7123  
\mathchardef\zf="7124  
\mathchardef\zvr="7125 
\mathchardef\zvs="7126 
\mathchardef\zf="7127  
\mathchardef\zG="7000  
\mathchardef\zD="7001  
\mathchardef\zY="7002  
\mathchardef\zL="7003  
\mathchardef\zX="7004  
\mathchardef\zP="7005  
\mathchardef\zS="7006  
\mathchardef\zU="7007  
\mathchardef\zF="7008  
\mathchardef\zW="700A  
\newcommand{\be}{\begin{equation}}
\newcommand{\ee}{\end{equation}}
\newcommand{\bea}{\begin{eqnarray}}
\newcommand{\eea}{\end{eqnarray}}
\newcommand{\beas}{\begin{eqnarray*}}
\newcommand{\eeas}{\end{eqnarray*}}
\newcommand{\nm}[1]{\ensuremath{\Vert #1 \Vert}}
\begin{document}

\title{Segre maps and entanglement \\ for multipartite systems of indistinguishable particles}
\author{Janusz Grabowski\footnote{email: jagrab@impan.pl} \\
\textit{Faculty of Mathematics and Natural Sciences, College of Sciences}\\
\textit{Cardinal Stefan Wyszy\'nski University,}\\
\textit{W\'oycickiego 1/3, 01-938 Warszawa, Poland}\\
\\
Marek Ku\'s\footnote{email: marek.kus@cft.edu.pl}\\
\textit{Center for Theoretical Physics, Polish Academy of Sciences,} \\
\textit{Aleja Lotnik{\'o}w 32/46, 02-668 Warszawa,
Poland} \\
\\
Giuseppe Marmo\footnote{email: marmo@na.infn.it}              \\
\textit{Dipartimento di Scienze Fisiche, Universit\`{a} ``Federico II'' di Napoli} \\
\textit{and Istituto Nazionale di Fisica Nucleare, Sezione di Napoli,} \\
\textit{Complesso Universitario di Monte Sant Angelo,} \\
\textit{Via Cintia, I-80126 Napoli, Italy} \\
} \maketitle

\begin{abstract}
We elaborate the concept of entanglement for multipartite system with bosonic and fermionic constituents and its generalization to systems with arbitrary parastatistics. The entanglement is characterized in terms of generalized Segre maps, supplementing thus an algebraic approach to the problem by a more geometric point of view.

\bigskip\noindent \textit{Key words:} entanglement, tensor product,
symmetry group, Bose statistics, Fermi statistics, parastatistics,
Young diagram, Segre map.

\medskip\noindent\textit{PACS:} 03.65.Aa, 03.67.Mn, 02.10.Xm.

\medskip\noindent\textit{MSC 2000:} 81P40, 81P16 (Primary);  15A69, 81R05 (Secondary).

\end{abstract}

\section{Introduction}
The possibility of identifying subsystems states in a given total state of a composite quantum system goes under the name of {\it separability}. In the case of pure states such a possibility is guaranteed if the composite state takes the form of the tensor product of subsystems states.

On the other hand, with the advent of Quantum Field Theory, we have identified
elementary particles which are either bosons or fermions. As a matter of fact,
according to the spin-statistics theorem all particles are either bosons or
fermions. The difference is that a state is unchanged by the interchange of two
identical bosons, whereas it changes the sign under the interchange of two
identical fermions. The characterization of fermionic states contains already
the lack of the factorization of the total state of the composite system.
According to usual wisdom, this would always imply the presence of an
entanglement. In our opinion this state of affairs cannot be maintained, so
there is a need of a refinement of the notion of entanglement that describes
better the situation when we are dealing with bosons and fermions or even with
`parabosons' or `parafermions' arising from potentially meaningful
parastatistics \cite{green53,ohnuki82}.

In \cite{grabowski11} we analyzed a concept of entanglement for a multipartite
system with bosonic and fermionic constituents in purely algebraic way using
the the representation theory of the underlying symmetry groups. Correlation
properties of indistinguishable particles become relevant when subsystems are
no longer separated by macroscopic distances, like e.g. in quantum gates based
on quantum dots, where they are confined to the same spatial regions
\cite{schliemann01}. In our approach to bosons and fermions we adopted the
concept of entanglement put forward in \cite{schliemann01,sckll01} for
fermionic systems and extended in \cite{eckert02,paskauskas01} in a natural way
to bosonic ones\footnote{For discussion of a slightly different treatment of
bosons in \cite{li01,ghirardi02,ghirardi04,ghirardi05} see the Introduction and
Conclusion sections of \cite{grabowski11}.}.

The problem of quantifying and measuring entanglement in systems of many
indistinguishable particles remains a topic of vivid investigations from
different points of view. Like in the case of distinguishable particles,
also nonclassical correlations of indistinguishable particles can be
studied with the use of various characteristics,
e.g., a suitably adapted concept of the entropy of a state \cite{buscemi07},
\cite{plastino09} or by mapping the Fock spaces onto spaces of qubits \cite{lari09},
\cite{buscemi11}, elaborating thus a concept of the so called {mode entanglement}
proposed earlier by Zanardi \cite{zanardi02}, \cite{zanardi02a}. Another area
of the current theoretical research on indistinguishable particles concentrates
around problems of identifying and measuring the entanglement in realistic
experimental circumstances \cite{dowling06}, \cite{sasaki11}. An exhaustive
exposition of these aspects of the theory, out of the scope of the present paper, is contained in a
recent review article by Tichy, Mintert, and Buchleitner \cite{tichy11}.

Our approach appeared to be sufficiently general to define entanglement also
for systems with an arbitrary {\it parastatistics} in a consistent and unified
way. For pure states we defined the S-rank, generalizing the notion of the
Schmidt rank for distinguishable particles and playing an analogous role in the
characterization of the degree of entanglement among particles with arbitrary
exchange symmetry (parastatistics).

In the algebraic geometry, a canonical embedding of the product $\C
P^{n-1}\ti\C P^{m-1}$ of complex projective spaces into $\C P^{nm-1}$ is known
under the name the {\it Segre embedding} (or the {\it Segre map}). In the
quantum mechanical context, the complex projective space $\C P^{n-1}$
represents pure states in the Hilbert space $\C^n$, and $\C P^{nm-1}$ represent
pure states in $\C^n\otimes\C^m$, so that the Segre embedding gives us a
geometrical description of separable pure states and, as shown in
\cite{grabowski05,grabowski06}, this description can be extended also to mixed
states.

In the present paper we give a geometric description of the entanglement for
systems with arbitrary symmetry (with respect to exchanging of subsystems) in
terms of {\it generalized Segre embeddings} associated with particular
parastatistics. This description is complementary to the one presented in
\cite{grabowski11} in terms of the S-rank. For systems with arbitrary exchange
symmetries, unlike for the systems of distinguishable particles, the spaces of
states are not, in general, projectivizations of the full tensor products of
the underlying Hilbert spaces of subsystems, but rather some parts of them. We
show in the following how to extend properly the concept of the Segre embedding
to achieve a geometric description analogous to that for distinguishable
particles. This approach uses a unifying mathematical framework based on the
representation theory and strongly suggesting certain concepts of the
separability, thus of the entanglement, in the case of indistinguishable
particles. For physicists, this approach may be viewed as being too
mathematical and abstract, but in our opinion it covers exactly the logic
structure of the notion of entanglement for systems of particles with some
symmetries.

In the next section we shortly review the relevant concepts of composite
systems of distinguishable particles, their description in terms of the
classical Segre maps, as well as entanglement measures for systems of
distinguishable particles. In sections 3 and 4 we give a brief review of the
algebraic description of the entanglement for bosons and fermions in terms of
the S-rank of tensors presented in \cite{grabowski11}. The main results are
contained in Sections~5-7 where we construct the Segre maps, first for bosons
and fermions and, finally, for systems with an arbitrary parastatistics. The
constructions are based on Theorem \ref{crucial} which relates simple tensor of
a given parastatistics with the corresponding Young diagram.

\section{Composite systems, separability, and entanglement}
Let $\cH$ be a Hilbert space with a Hermitian product $\bk{\cdot}{\cdot}$,
$gl(\cH)$ be the vector space of complex linear operators on $\cH$, $GL(\cH)$
be the group of invertible operators from $gl(\cH)$, and $U(\cH)$ be its
subgroup of unitary operators on $\cH$. For simplicity, we will assume that
$\cH$ is finite-dimensional, say $\dim(\cH)=n$, but a major part of our work
remains valid also for Hilbert spaces of infinite dimensions. Note only that in
the infinite dimensions the corresponding tensor product
$\mathcal{H}_1\otimes\mathcal{H}_2$ is the tensor product in the category of
Hilbert spaces, i.e., corresponding to the Hilbert-Schmidt norm.

With $u(\cH)$ we will denote the Lie algebra of the Lie group $U(\cH)$
consisting of anti-Hermitian operators, while $u^*(\cH)$ will denote its dual
interpreted as the Euclidean space of Hermitian operators with the scalar
product \be\label{metric1}\la A,B\ra_{u^*}=\frac{1}{2}\text{Tr}(AB). \ee The
space of non-negatively defined operators from $gl(\cH)$, i.e. of those $\zr\in
gl(\cH)$ which can be written in the form $\zr=T^\dag T$ for a certain $T\in
gl(\cH)$, we denote as $\cP(\cH)$. It is a convex cone in $gl(\cH)$ and the set
of {\it density states} $\cD(\cH)$ is distinguished in $\cP(\cH)$ by the
normalizing condition $\text{Tr}(\zr)=1$. We will regard $\cP(\cH)$ and
$\cD(\cH)$ as embedded in $u^*(\cH)$, so that the space $\cD(\cH)$ of density
states is a convex set in the affine hyperplane in $u^*(\cH)$ determined by the
equation $\text{Tr}(\rho)=1$. As the difference of two Hermitian operators of
trace 1 (the vector connection two points in the affine hyperplane) is a
Hermitian operator of trace 0, the model vector spaces of this affine
hyperplane is therefore canonically identified with the space of Hermitian
operators with the trace $0$.

Denote the set of all operators from $\cD(\cH)$ of rank $k$ with $\cD^k(\cH)$.
It is well known that the set of extreme points of $\cD(\cH)$ coincides with
the set $\cD^1(\cH)$ of {\it pure states}, i.e. the set of one-dimensional
orthogonal projectors $\rho_x=\mid x\ra\la x\mid$, $\Vert x\Vert=1$. Hence,
every element of $\cD(\cH)$ is a convex combination of points from
$\cD^1(\cH)$. The space $\cD^1(\cH)$ of all pure states can be identified with
the complex projective space $\mathbb{P}\cH\simeq\C P^{n-1}$ {\it via} the
projection

\be\label{pure}\cH\setminus\{ 0\}\ni x\mapsto \zr_x=\frac{\mid\! x\ra\la
x\!\!\mid}{\Vert x\Vert^2}\in\cD^1(\cH) \ee
which identifies the points of the
orbits of the action in $\cH$ of the multiplicative group $\C\setminus\{ 0\}$
by complex homotheties. It is well known that the complex projective space
$\mathbb{P}\cH=\cD^1(\cH)$ is canonically a K\"ahler manifold. The symplectic
structure on $\cD^1(\cH)\subset u^*(\cH)$ is the canonical symplectic structure
of an $U(\cH)$-coadjoint orbit, and the metric, called the {\it Fubini-Study
metric} is just the metric induced from the embedding of $\cD^1(\cH)$ into the
Euclidean space $u^*(\cH)$. This is the best known compact K\"ahler manifold in
the algebraic geometry.

Suppose now that our Hilbert space has a fixed decomposition into the tensor
product of two Hilbert spaces, $\cH=\cH^1\ot\cH^2$. This additional input is
crucial in studying composite quantum systems. Observe first that the tensor
product map \be\otimes:\cH^1\ti\cH^2\raa\cH=\cH^1\ot\cH^2 \ee associates the
product of rays with a ray, so it induces a canonical embedding on the level of
complex projective spaces,
\bea \text{Seg}:\mathbb{P}\cH^1\ti \mathbb{P}\cH^2&\raa& \mathbb{P}\cH=\mathbb{P}(\cH^1\ot\cH^2),\\
(\mid\!x^1\ra\la x^1\!\!\mid,\mid\!x^2\ra\la x^2\!\!\mid) &\mapsto&
\mid\!x^1\ot x^2\ra\la x^1\ot x^2\!\!\mid\,,\quad \nm{x^1}=\nm{x^2}=1\,.
\eea
This embedding of the product of complex projective spaces into the projective
space of the tensor product is called in the literature the {\it Segre
embedding} \cite{segre}. Note that the elements of the image
$\text{Seg}(\mathbb{P}\cH^1\ti \mathbb{P}\cH^2)$ in
$\mathbb{P}\cH=\mathbb{P}(\cH^1\ot\cH^2)$ are usually called {\it separable
pure states} (separable with respect to the decomposition $\cH=\cH^1\ot\cH^2$).

The Segre embedding is related to the (external) tensor product of the basic
representations of the unitary groups $U(\cH^1)$ and $U(\cH^2)$, i.e. with the
representation of the direct product group in $\cH=\cH^1\ot\cH^2$, \beas
U(\cH^1)\ti U(\cH^2)\ni(T_1,T_2)&\mapsto& T_1\ot T_2\in
U(\cH)=U(\cH^1\ot\cH^2),\\
(T_1\ot T_2)(x^1\ot x^2)&=&T_1(x^1)\ot T_2(x^2).
\eeas
Note that $T_1\ot T_2$ is unitary, since the Hermitian product in $\cH$ is
related to the Hermitian products in $\cH^1$ and $\cH^2$ by \be\label{sp1} \la
x^1\ot x^2,y^1\ot y^2\ra_{\cH}=\la x^1,y^1\ra_{\cH^1}\cdot \la
x^2,y^2\ra_{\cH^2}. \ee The above group embedding gives rise to the
corresponding embedding of Lie algebras or, by our identification, of their
duals which, with some abuse of notation, we will denote by
\be\label{imb}\text{Seg}:u^*(\cH^1)\ti u^*(\cH^2)\raa u^*(\cH),\quad
(A,B)\mapsto A\ot B. \ee The original Segre embedding is just the latter map
reduced to pure states. In fact, a stronger result holds true
\cite{grabowski05,grabowski06}.
\begin{prop} The embedding (\ref{imb}) maps
$\cD^k(\cH^1)\ti\cD^l(\cH^2)$ into $\cD^{kl}(\cH)$.
\end{prop}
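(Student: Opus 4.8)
The plan is to reduce the whole statement to the interaction between spectral decomposition and the tensor product: I would diagonalize each factor, assemble the resulting spectral decomposition of $A\ot B$, and then read off positivity, unit trace, and rank $kl$ directly from it. Concretely, since $A\in\cD^k(\cH^1)$ is a non-negatively defined Hermitian operator of trace $1$ and rank $k$, I would start from its spectral decomposition
\be
A=\sum_{i=1}^{k}\zl_i\,\kb{e_i}{e_i},\qquad \zl_i>0,\quad \sum_{i=1}^{k}\zl_i=1,
\ee
with $\{e_i\}_{i=1}^{k}$ orthonormal in $\cH^1$, and likewise $B=\sum_{j=1}^{l}\zm_j\,\kb{f_j}{f_j}$ with $\zm_j>0$, $\sum_j\zm_j=1$ and $\{f_j\}$ orthonormal in $\cH^2$.

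The next step is to expand the tensor product termwise,
\be
A\ot B=\sum_{i=1}^{k}\sum_{j=1}^{l}\zl_i\zm_j\,\kb{e_i\ot f_j}{e_i\ot f_j}.
\ee
Here the key input is the product formula (\ref{sp1}) for the Hermitian products, which gives $\la e_i\ot f_j,e_{i'}\ot f_{j'}\ra_{\cH}=\zd_{ii'}\zd_{jj'}$, so that the $kl$ vectors $e_i\ot f_j$ form an orthonormal system in $\cH=\cH^1\ot\cH^2$. Consequently the displayed expression is already the spectral decomposition of $A\ot B$, with eigenvalues $\zl_i\zm_j$ and orthonormal eigenvectors.

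From this single decomposition the three defining conditions would follow immediately. Positivity holds because every eigenvalue $\zl_i\zm_j$ is strictly positive (equivalently, writing $A=S^\dag S$ and $B=T^\dag T$ one has $A\ot B=(S\ot T)^\dag(S\ot T)$, which lies in $\cP(\cH)$); the trace is $\tr(A\ot B)=\big(\sum_i\zl_i\big)\big(\sum_j\zm_j\big)=1$, so $A\ot B\in\cD(\cH)$; and since there are exactly $kl$ nonzero eigenvalues attached to linearly independent eigenvectors, the rank equals $kl$, i.e. $A\ot B\in\cD^{kl}(\cH)$.

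I expect no serious obstacle: the argument is essentially bookkeeping. The only place demanding genuine care is the claim that the rank is \emph{exactly} $kl$ rather than merely at most $kl$; this rests on two facts that must be invoked explicitly, namely that no product eigenvalue $\zl_i\zm_j$ vanishes (from $\zl_i,\zm_j>0$) and that the product eigenvectors are linearly independent, which is precisely where the orthonormality granted by (\ref{sp1}) is indispensable.
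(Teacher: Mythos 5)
Your argument is correct and complete: the spectral decompositions of $A$ and $B$ combine, via the product formula (\ref{sp1}), into a spectral decomposition of $A\ot B$ with exactly $kl$ strictly positive eigenvalues, which yields positivity, unit trace, and rank $kl$ all at once. The paper itself states this proposition without proof, deferring to \cite{grabowski05,grabowski06}, and your spectral-decomposition argument is precisely the standard one used there; you have also correctly isolated the only delicate point, namely that the rank is exactly (not merely at most) $kl$.
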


\medskip\noindent
Let us denote the image $\text{Seg}(\cD^1(\cH^1)\ti\cD^1(\cH^2))$, i.e.  the set of {\it separable pure states},
with $\cS^1(\cH)$, and its convex hull, ${conv}\left(\cS^1(\cH)\right)$, i.e. the set of all {\it mixed separable
states} in $u^*(\cH)$, with $\cS(\cH)$. The states from
$$\cE(\cH)=\cD(\cH)\setminus\cS(\cH),$$
i.e. those which are not separable, are called {\it entangled states}. It is well known (see e.g. \cite{grabowski05})
that $\cS^1(\cH)$ is exactly the set of extremal points of $\cS(\cH)$. What we have just presented is a very simple geometric interpretation of separability and entanglement.
\medskip\noindent

The entangled states play an important role in quantum computing and one of main problems is to decide
effectively whether a given composite state is entangled or not. An abstract measurement of entanglement can
be based on the following observation (see also \cite{Vidal})

\smallskip
Let $E$ be the set of all extreme points of a compact convex set $K$ in a finite-dimensional real vector space
$V$, and let $E_0$ be a compact subset of $E$ with the convex hull $K_0=\text{conv}(E_0)\subset K$. For every
non-negative function $f:E\raa\R_+$  define its extension ({\it convex roof}) $f_K:K\raa\R_+$ by
\be\label{cf}f_K(x)=\inf_{x=\sum t_i\za_i}\sum t_if(\za_i),
\ee
where the {\it infimum} is taken with respect to all expressions of $x$ in the form of convex combinations of
points from $E$. Recall that, according to Krein-Milman theorem, $K$ is the convex hull of its extreme
points.
\begin{prop}\label{hullt}
For every non-negative continuous function $f:E\raa\R_+$ which vanishes exactly on $E_0$, the function $f_K$ is
convex on $K$ and vanishes exactly on $K_0$
\end{prop}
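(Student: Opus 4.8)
The plan is to establish the two assertions separately: convexity of $f_K$ is a formal consequence of the convex-roof construction, while the identification of its zero set with $K_0$ rests on a compactness argument, and it is the latter that carries the real content. Throughout I would use $f_K\ge 0$, which is immediate from $f\ge 0$, and Krein--Milman (quoted in the statement) to guarantee that every point of $K$ admits a representation as a convex combination of points of $E$, so that $f_K$ is everywhere defined.

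For convexity I would argue directly. Given $x,y\in K$, $\lambda\in[0,1]$, and $\varepsilon>0$, I would choose $\varepsilon$-optimal convex combinations $x=\sum_i s_i\alpha_i$ and $y=\sum_j r_j\beta_j$ with $\alpha_i,\beta_j\in E$, so that $\sum_i s_i f(\alpha_i)\le f_K(x)+\varepsilon$ and $\sum_j r_j f(\beta_j)\le f_K(y)+\varepsilon$. Then $\lambda x+(1-\lambda)y=\sum_i\lambda s_i\alpha_i+\sum_j(1-\lambda)r_j\beta_j$ is again a convex combination of points of $E$, whence the definition of $f_K$ gives $f_K(\lambda x+(1-\lambda)y)\le\lambda f_K(x)+(1-\lambda)f_K(y)+\varepsilon$; letting $\varepsilon\to0$ proves convexity.

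For the zero set, one inclusion is easy. If $x\in K_0=\mathrm{conv}(E_0)$, I would write $x=\sum_i t_i\alpha_i$ with $\alpha_i\in E_0\subseteq E$; since $f$ vanishes on $E_0$ this single combination contributes $\sum_i t_i f(\alpha_i)=0$, forcing $f_K(x)=0$. The reverse inclusion is the substance. Assuming $f_K(x)=0$, I would take convex combinations $x=\sum_i t_i^{(n)}\alpha_i^{(n)}$ with $\sum_i t_i^{(n)}f(\alpha_i^{(n)})\to0$, and use Carath\'eodory's theorem (applied to $x$ in $V$, or to the point $(x,\sum_i t_i^{(n)}f(\alpha_i^{(n)}))$ on the graph of $f$ in $V\times\R$) to bound the number of terms by $\dim V+1$ uniformly in $n$. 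Since $K$ is compact and the weight vectors lie in a simplex, I would pass to a subsequence along which $t_i^{(n)}\to t_i$ and $\alpha_i^{(n)}\to\alpha_i\in K$, giving $x=\sum_i t_i\alpha_i$ with $\sum_i t_i=1$. As each summand $t_i^{(n)}f(\alpha_i^{(n)})$ is non-negative and tends to $0$, every index with $t_i>0$ satisfies $f(\alpha_i^{(n)})\to0$.

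The hard part will be to conclude that each such limit point $\alpha_i$ with $t_i>0$ actually lies in $E_0$, for then $x=\sum_{t_i>0}t_i\alpha_i\in\mathrm{conv}(E_0)=K_0$ (and $K_0$ is closed since $E_0$ is compact). This is immediate when the extreme-point set $E$ is closed---as it is in the quantum-mechanical application, where $K=\cD(\cH)$ and $E=\cD^1(\cH)$ is the compact set of pure states---because then $\alpha_i\in E$ and continuity gives $f(\alpha_i)=\lim_n f(\alpha_i^{(n)})=0$, so $\alpha_i\in E_0$. The main obstacle is precisely this control of the limit points: if $E$ is not closed, a limit $\alpha_i$ could escape $E$ and continuity of $f$ no longer applies, so one must instead exhibit a uniform positive lower bound for $f$ on the part of $E$ lying at distance $\ge\delta$ from $E_0$. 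Compactness of $E$ (equivalently, lower semicontinuity of a continuous extension on $\overline{E}$) is exactly what delivers such a bound and thereby forbids $\alpha_i^{(n)}$ from drifting away from $E_0$ while $f(\alpha_i^{(n)})\to0$.
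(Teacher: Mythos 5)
The paper states Proposition \ref{hullt} without proof (it is offered as an observation, with a pointer to \cite{grabowski05,grabowski06}), so there is no in-paper argument to compare yours against line by line; I can only assess the proposal on its own terms. Your convexity argument and the inclusion $K_0\subseteq f_K^{-1}(0)$ are correct and standard. For the reverse inclusion your strategy --- Carath\'eodory applied to the graph points $(\alpha_i,f(\alpha_i))$ in $V\times\R$ to get the uniform bound $\dim V+2$ on the number of terms, then compactness of $K$ and of the weight simplex to extract convergent subsequences --- is the right one, and it does close the proof whenever $E$ is closed, which covers the paper's application, where $E=\cD^1(\cH)$ is compact.

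The difficulty you flag at the end is, however, not a removable technicality: without closedness of $E$ the proposition is actually false as stated, so no uniform positive lower bound for $f$ off a neighbourhood of $E_0$ can be extracted from the stated hypotheses. Concretely, in $V=\R^3$ let $K$ be the convex hull of the unit circle $C=\{p_\theta=(\cos\theta,\sin\theta,0)\}$ together with the two points $(1,0,\pm 1)$. Then $E=\{p_\theta:\theta\in(0,2\pi)\}\cup\{(1,0,1),(1,0,-1)\}$ is not closed, since $p_0=(1,0,0)$ is the midpoint of the segment joining $(1,0,1)$ and $(1,0,-1)$ and hence not extreme. Take $E_0=\{p_\pi\}$ and $f(p_\theta)=|\sin\theta|$, $f(1,0,\pm1)=1$; this $f$ is continuous on $E$, non-negative, and vanishes exactly on $E_0$. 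Yet $(0,0,0)=\frac12 p_\theta+\frac12 p_{\theta+\pi}$ for every $\theta\in(0,\pi)$, with cost $|\sin\theta|\raa 0$ as $\theta\raa 0^{+}$, so $f_K(0,0,0)=0$ although $(0,0,0)\notin K_0=\{p_\pi\}$. The correct conclusion is therefore the one you reach in the closed case: under the additional (and clearly intended) hypothesis that $E$ is compact, your limit points $\alpha_i$ with $t_i>0$ lie in $E$, continuity gives $f(\alpha_i)=\lim_n f(\alpha_i^{(n)})=0$, hence $\alpha_i\in E_0$ and $x\in\mathrm{conv}(E_0)=K_0$. As written, your final sentence appeals to a compactness of $E$ that the proposition does not grant; you should either add that hypothesis explicitly or note, as above, that the statement fails without it.
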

An immediate consequence is the following (cf. \cite{grabowski05,grabowski06}).
\begin{cor}
Let $\cH=\cH^1\ot\cH^2$ and let $F:\cD^1(\cH)\raa\R_+$ be a continuous function which vanishes exactly on on the set
$\cS^1(\cH)$ of separable pure states. Then,
$$\zm=F_{\cD(\cH)}:\cD(\cH)\raa\R_+$$
is a measure of entanglement, i.e. $\zm$ is convex and $\zm(\zr)=0$ if and only if the (mixed) density state $\zr$ is separable. Moreover, if
$f$ is taken $U(\cH^1)\ti U(\cH^2)$-invariant, then $\zm$ is
$U(\cH^1)\ti U(\cH^2)$-invariant.
\end{cor}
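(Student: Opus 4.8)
The plan is to derive the Corollary as a direct specialization of Proposition \ref{hullt}, which has already done the heavy lifting. The key observation is that we are in exactly the abstract setting of that proposition, with the dictionary: the ambient convex set $K$ is $\cD(\cH)$, its set of extreme points $E$ is the set $\cD^1(\cH)$ of pure states (as recalled earlier, the extreme points of $\cD(\cH)$ are precisely the pure states), the distinguished compact subset $E_0$ is $\cS^1(\cH)$, and its convex hull $K_0=\operatorname{conv}(\cS^1(\cH))$ is the set $\cS(\cH)$ of mixed separable states. The function playing the role of $f$ is $F$, and $f_K$ is $\mu=F_{\cD(\cH)}$.

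**First I would** verify that the hypotheses of Proposition \ref{hullt} are met. We need $K=\cD(\cH)$ to be a compact convex set in a finite-dimensional real vector space: this holds because $\cD(\cH)$ sits inside the affine hyperplane $\{\tr(\rho)=1\}$ of the finite-dimensional real space $u^*(\cH)$ of Hermitian operators, and is closed and bounded there. We need $E_0=\cS^1(\cH)$ to be a compact subset of $E=\cD^1(\cH)$: compactness follows since $\cS^1(\cH)=\operatorname{Seg}(\cD^1(\cH^1)\ti\cD^1(\cH^2))$ is the continuous image of the compact product $\mathbb{P}\cH^1\ti\mathbb{P}\cH^2$ under the Segre map. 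Finally, $F$ is assumed continuous, non-negative, and vanishing exactly on $\cS^1(\cH)=E_0$, which is precisely the hypothesis on $f$.

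**With the hypotheses checked,** Proposition \ref{hullt} immediately yields that $\mu=F_{\cD(\cH)}$ is convex on $\cD(\cH)$ and vanishes exactly on $K_0=\cS(\cH)$. Since $\cE(\cH)=\cD(\cH)\setminus\cS(\cH)$ by definition, $\mu(\rho)=0$ if and only if $\rho$ is separable, which is exactly the assertion that $\mu$ is a measure of entanglement.

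**The only genuinely new point** is the final invariance claim, and I expect this to be the least routine step, though still short. Suppose $F$ is $U(\cH^1)\ti U(\cH^2)$-invariant. I would argue that the group $G=U(\cH^1)\ti U(\cH^2)$ acts on $\cD(\cH)$ (via $\rho\mapsto (T_1\ot T_2)\,\rho\,(T_1\ot T_2)^\dagger$), that this action is affine and hence carries extreme points to extreme points and convex combinations to convex combinations, and crucially that it preserves the set $\cS^1(\cH)$ of separable pure states. Then for any $g\in G$ and any convex decomposition $\rho=\sum t_i\alpha_i$ with $\alpha_i\in\cD^1(\cH)$, applying $g$ gives $g\cdot\rho=\sum t_i (g\cdot\alpha_i)$, a decomposition of the same combinatorial type; since $F(g\cdot\alpha_i)=F(\alpha_i)$ by assumption, the infimum defining $\mu$ in \eqref{cf} is unchanged, so $\mu(g\cdot\rho)=\mu(\rho)$. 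The main thing to be careful about is simply that $g$ induces a bijection of $\cD^1(\cH)$ onto itself so that the set of admissible decompositions is merely permuted, leaving the infimum invariant.
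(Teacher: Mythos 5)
Your proof is correct and follows exactly the route the paper intends: the paper gives no explicit proof, simply declaring the corollary ``an immediate consequence'' of Proposition \ref{hullt}, and your dictionary $K=\cD(\cH)$, $E=\cD^1(\cH)$, $E_0=\cS^1(\cH)$, $K_0=\cS(\cH)$ together with the short invariance argument is precisely the intended specialization.
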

\begin{remark}
In the terminology of \cite{Vidal}, the convex roof function $F_{\cD(\cH)}$ is {\it entanglement monotone} if $F$ is entanglement monotone on pure states.
\end{remark}

\section{Tensor algebras, fermions, and bosons}
\label{sec:tensoralgebras}

To describe some properties of systems composed of
indistinguishable particles and to fix the notation, let us start
with introducing corresponding tensor algebras associated with a
Hilbert space $\mathcal{H}$.

In the tensor power $\mathcal{H}^{\otimes
k}=\underset{k-\mathrm{times}}{\underbrace{\mathcal{H}\otimes\cdots\otimes\mathcal{H}}}$,
we distinguish the subspaces: $\mathcal{H}^{\vee
k}=\underset{k-\mathrm{times}}{\underbrace{\mathcal{H}\vee\cdots\vee\mathcal{H}}}$
of totally symmetric tensors and $\mathcal{H}^{\wedge
k}=\underset{k-\mathrm{times}}{\underbrace{\mathcal{H}\wedge\cdots\wedge\mathcal{H}}}$
of totally antisymmetric ones, together with the symmetrization,
$\pi_k^\vee:\mathcal{H}^{\otimes k}\rightarrow\mathcal{H}^{\vee
k}$, and antisymmetrization, $\pi_k^\wedge:\mathcal{H}^{\otimes
k}\rightarrow\mathcal{H}^{\wedge k}$, projectors:
\begin{eqnarray}\label{sproj}
\pi_k^\vee(f_1\otimes\cdots\otimes f_k)&=&
\frac{1}{k!}\sum_{\sigma\in S_k}f_{\sigma(1)}\otimes\cdots\otimes
f_{\sigma(k)},
\\
\pi_k^\wedge(f_1\otimes\cdots\otimes f_k)&=&
\frac{1}{k!}\sum_{\sigma\in S_k}(-1)^\sigma
f_{\sigma(1)}\otimes\cdots\otimes f_{\sigma(k)}.
\end{eqnarray}
Here, $S_k$ is the group of all permutations
$\sigma:\{1,\ldots,k\}\rightarrow\{1,\ldots,k\}$, and
$(-1)^\sigma$ denotes the sign of the permutation $\sigma$. Note
that with every permutation $\sigma\in S_k$ there is a canonically
associated  unitary operator $U_\sigma$ on $\mathcal{H}^{\otimes
k}$ defined by $$U_\sigma(f_1\otimes\cdots\otimes
f_k)=f_{\sigma(1)}\otimes\cdots\otimes f_{\sigma(k)}\,,$$ so that
the map $\sigma\mapsto U_\sigma$ is an injective unitary
representation of $S_k$ in $\mathcal{H}^{\otimes k}$. We
will write simply $\sigma$ instead of $U_\sigma$ if no
misunderstanding is possible. Symmetric and skew-symmetric tensors
are characterized in terms of this unitary action by $\sigma(v)=v$
and $\sigma(v)=(-1)^\sigma v$, respectively, for all $\sigma\in
S_k$

We put, by convention, $\mathcal{H}^{\otimes 0}=\mathcal{H}^{\vee
0}=\mathcal{H}^{\wedge 0}=\mathbb{C}$. It is well known that the
obvious structure of a unital graded associative algebra on the
graded space $\mathcal{H}^\otimes
=\mathop\otimes\limits_{k=0}^\infty\mathcal{H}^{\otimes k}$ (the
{\it tensor algebra}) induces canonical unital graded associative algebra structures
on the spaces $\mathcal{H}^\vee
=\mathop\oplus\limits_{k=0}^\infty\mathcal{H}^{\vee k}$ (called
the {\it bosonic Fock space}) and $\mathcal{H}^\wedge
=\mathop\oplus\limits_{k=0}^\infty\mathcal{H}^{\wedge k}$ (called
the {\it fermionic Fock space}) of symmetric and antisymmetric
tensors. This simply means that we have associative multiplications \begin{equation}\label{vee}
v_1\vee v_2=\pi^\vee(v_1\otimes v_2),
\end{equation}
\begin{equation}\label{wedge}
w_1\wedge w_2=\pi^\wedge(w_1\otimes w_2),,
\end{equation}
where
\begin{equation}\label{sym}
\pi^\vee=\mathop\oplus\limits_{k=0}^\infty
\pi_k^\vee:\mathcal{H}^\otimes\rightarrow\mathcal{H}^\vee,
\end{equation}
and
\begin{equation}\label{asym}
\pi^\wedge=\mathop\oplus\limits_{k=0}^\infty
\pi_k^\wedge:\mathcal{H}^\otimes\rightarrow\mathcal{H}^\wedge,
\end{equation}
are the symmetrization and antisymmetrization projections. Moreover, these multiplications respect the grading, i.e. $v_1\vee v_2\in \cH^{\vee (k+l)}$ if $v_1\in\cH^{\vee k}$, $v_2\in\cH^{\vee l}$ , and $w_1\we w_2\in \cH^{\we (k+l)}$ if $w_1\in\cH^{\we k}$, $w_2\in\cH^{\we l}$.
Note also
that the multiplication in $\mathcal{H}^\vee$ is commutative,
$v_1\vee v_2= v_2\vee v_1$, and the multiplication in
$\mathcal{H}^\wedge$ is graded commutative, $w_1\wedge
w_2=(-1)^{k_1\cdot k_2} w_2\wedge w_1$, for
$w_i\in\mathcal{H}^{\wedge k_i}$.

It is well known that the symmetric tensor algebra $\mathcal{H}^\vee$ can be canonically
identified with the algebra $Pol(\mathcal{H})$ of polynomial
functions on $\mathcal{H}$.
Indeed, any $f\in\mathcal{H}$ can be identified with the linear
function $x_f$ on $\mathcal{H}$ by means of the Hermitian product: $x_f(y)=\langle f| y\rangle$. We must stress, however, that the identification $f\mapsto x_f$ is anti-linear.
This can be extended to an anti-linear isomorphism of commutative algebras in
which $f_1\vee\cdots\vee f_k$ corresponds to the homogenous
polynomial $x_{f_1}\cdots x_{f_k}$. Similarly, one identifies
$\mathcal{H}^\wedge$ with the Grassmann algebra $Grass(\cH)$ of
polynomial (super)functions on $\mathcal{H}$. Here, however,
with $f\in\mathcal{H}$ we associate a linear function $\xi_f$ on
$\mathcal{H}$ regarded as and odd function:
$\xi_f\xi_{f^\prime}=-\xi_{f^\prime}\xi_f$. In the language of
supergeometry one speaks about the purely odd vector space
$\Pi\mathcal{H}$ obtained from $\mathcal{H}$ by changing the parity.

If we fix a basis $e_1,\ldots,e_n$ in $\mathcal{H}$ and associate
with its elements even linear functions $x_1,\ldots,x_n$ on
$\mathcal{H}$, and odd linear functions $\xi_1,\ldots,\xi_n$ on
$\Pi\mathcal{H}$, then
$\mathcal{H}^\vee$ can be identified with the algebra of complex
polynomials in $n$ commuting variables, $\mathcal{H}^\vee\simeq\mathbb{C}[x_1,\ldots,x_n]$. Similarly,
$\mathcal{H}^\wedge\simeq\mathbb{C}[\xi_1,\ldots,\xi_n]$, i.e.,
$\mathcal{H}^\wedge$ can be identified with the algebra of complex
Grassmann polynomials in $n$ anticommuting variables. The
subspaces $\mathcal{H}^{\vee k}$ and $\mathcal{H}^{\wedge k}$
correspond to homogenous polynomials of degree $k$. It is
straightforward that homogeneous polynomials $x_1^{k_1}\cdots
x_n^{k_n}$, with $k_1+\cdots+k_n=k$, form a basis of
$\mathcal{H}^{\vee k}$, while homogeneous Grassmann polynomials
$\xi_{i_1}\wedge\cdots\wedge\xi_{i_k}$, with $1\le
i_1<i_2<\cdots<i_k\le n$, form a basis of $\mathcal{H}^{\wedge
k}$. In consequence, $\dim\mathcal{H}^{\vee
k}={{n+k-1}\choose{k}}$ and $\dim\mathcal{H}^{\wedge
k}={{n}\choose{k}}$, so the gradation in the fermionic Fock space
is finite-dimensional (for a finite-dimensional $\mathcal{H})$.

\medskip
Note that any basis $\{e_1,\ldots, e_n\}$ in $\mathcal{H}$ induces
a basis $\big\{e_{i_1}\otimes e_{i_2}\otimes\cdots\otimes
e_{i_k}\, |\; i_1,\ldots, i_k\in\{1,\ldots,n\}\big\}$ in
$\mathcal{H}^{\otimes k}$. Therefore, any
$u\in\mathcal{H}^{\otimes k}$ can be uniquely written as a linear
combination
\begin{equation}\label{gentens}
u=\sum_{i_1,\ldots,i_k=1}^n u^{i_1\ldots i_k}e_{i_1}
\otimes\ldots\otimes e_{i_k}.
\end{equation}
If $u\in\mathcal{H}^{\vee k}$, then the tensor coefficients
$u^{i_1\ldots i_k}$ are totally symmetric and, after applying the
symmetrization projection to (\ref{gentens}), we get
\begin{equation}\label{symtens}
u=\sum_{i_1,\ldots,i_k=1}^n u^{i_1\ldots i_k}e_{i_1}
\vee\ldots\vee e_{i_k}.
\end{equation}
Similarly, if $u\in\mathcal{H}^{\wedge k}$, the tensor
coefficients $u^{i_1\ldots i_k}$ are totally antisymmetric and
\begin{equation}\label{antisymtens}
u=\sum_{i_1,\ldots,i_k=1}^n u^{i_1\ldots i_k}e_{i_1}
\wedge\ldots\wedge e_{i_k}.
\end{equation}
We will refer to the coefficients $u^{i_1\ldots i_k}$ as to the
{\it coefficients of $u$ in the basis} $\{e_1,\ldots, e_n\}$.

The Hermitian product in $\mathcal{H}$ has an obvious extension to a Hermitian product in
$\mathcal{H}^{\otimes k}$,
\begin{equation}\label{kpairing}
\langle f_1\otimes\cdots\otimes f_k|  g_1\otimes\cdots\otimes
g_k\rangle =\prod_{i=1}^k\langle f_i| g_i\rangle,
\end{equation}
and viewing symmetric and antisymmetric tensors as canonically
embedded in the tensor algebra, we find the corresponding Hermitian products in
$\mathcal{H}^{\vee k}$ and $\mathcal{H}^{\wedge k}$.

For $f_1,\dots,f_k\in\cH$ and $g_1,\dots,g_k\in\cH$, we get
\begin{equation}\label{pairingvee}
\langle f_1\vee\cdots\vee f_k|  g_1\vee\cdots\vee g_k\rangle
=\frac{1}{(k!)^2}\sum_{\sigma,\tau\in S_k}\prod_{i=1}^k \langle
f_{\sigma(i)}| g_{\tau(i)}\rangle =\frac{1}{k!}\mathrm{per}
(\langle f_i| g_j\rangle).
\end{equation}
Here, $\frac{1}{k!}\sum_{\tau\in S_k}\prod_{i=1}^k
a_{i\tau(i)}=\mathrm{per}(a_{ij})$ is the permanent of the matrix
$A=(a_{ij})$. Similarly,
\begin{equation}\label{pairingwedge}
\langle f_1\wedge\cdots\wedge f_k|  g_1\wedge\cdots\wedge
g_k\rangle =\frac{1}{k!}\det(\langle f_i| g_j\rangle).
\end{equation}

These Hermitian products can be
generalized to certain `pairings' ({\it contractions} or {\it
inner products}) between $\mathcal{H}^{\vee k}$ and
$\mathcal{H}^{\vee l}$ on one hand, and $\mathcal{H}^{\wedge
k}$ and $\mathcal{H}^{\wedge l}$ on the other, $l\le k$. For the
standard simple tensors $f=f_1\otimes\cdots\otimes f_k\in
\mathcal{H}^{\otimes k}$ and $ g= g_1\otimes\cdots\otimes g_l\in
\mathcal{H}^{\otimes l}$, we just put
$$\imath_gf=\langle
f_1\otimes\cdots\otimes f_l|  g_1\otimes\cdots\otimes g_l\rangle
f_{l+1}\otimes\cdots\otimes f_k$$ and extend it by linearity to
all tensors. It is easy to see now that, if $v=f_1\vee\cdots\vee
f_k\in \mathcal{H}^{\vee k}\subset\mathcal{H}^{\otimes k}$ and
$\nu=g_1\vee\cdots\vee g_l\in \mathcal{H}^{\vee l}
\subset\mathcal{H}^{\otimes l}$, then
$\imath_{\nu}v\in\mathcal{H}^{\vee (k-l)}$.

Similarly, $\imath_{\omega}w\in\mathcal{H}^{\wedge (k-l)}$, if
$w\in \mathcal{H}^{\wedge k}\subset\mathcal{H}^{\otimes k}$ and
$\omega\in \mathcal{H}^{\wedge l}
\subset\mathcal{H}^{\otimes l}$. Explicitly,
\begin{eqnarray}
\imath_{g_1\vee\cdots\vee g_l}f_1\vee\cdots\vee f_k
=\frac{1}{k!\,l!} \sum_{\substack{\sigma\in S_k\\ \tau\in
S_l}}\prod_{j=1}^l \langle f_{\sigma(j)}|g_{\tau(j)}\rangle
f_{\sigma(l+1)}
\otimes\cdots\otimes f_{\sigma(k)} \nonumber \\
=\frac{(k-l)!}{k!} \sum_{\substack{S\in S(l,k-l)\\ \tau\in
S_l}}\prod_{j=1}^l \langle f_{S(j)}|g_{\tau(j)}\rangle f_{S(l+1)}
\vee\cdots\vee f_{S(k)},
\end{eqnarray}
where $S(l,k-l)$ denotes the group of all $(l,k-l)$ shuffles.
Recall that a permutation $\tau$ in $S_{p + q}$ is a {\it $(p,q)$ shuffle} if
$\zt(1)<\cdots<\tau(p)$ and $\zt(p+1)<\cdots<\tau(p+q)$.

For skew-symmetric tensors,
\begin{eqnarray}
\imath_{g_1\wedge\cdots\wedge g_l}f_1\wedge\cdots\wedge f_k
=\frac{1}{k!\,l!} \sum_{\substack{\sigma\in S_k\\ \tau\in
S_l}}(-1)^\sigma(-1)^\tau \prod_{j=1}^l\langle
f_{\sigma(j)}|g_{\tau(j)}\rangle f_{\sigma(l+1)}
\otimes\cdots\otimes f_{\sigma(k)} \nonumber \\
=\frac{(k-l)!}{k!} \sum_{\substack{S\in S(l,k-l)\\ \tau\in
S_l}}(-1)^\sigma (-1)^\tau \prod_{j=1}^l\langle
f_{S(j)}|g_{\tau(j)}\rangle f_{S(l+1)} \wedge\cdots\wedge
f_{S(k)}.
\end{eqnarray}
In particular,
\begin{equation}
\imath_{g_1\vee\cdots\vee g_k}f_1\vee\cdots\vee f_k = \langle
f_1\vee\cdots\vee f_k| g_1\vee\cdots\vee g_k \rangle,
\end{equation}
and
\begin{equation}
\imath_{g_1\wedge\cdots\wedge g_k}f_1\wedge\cdots\wedge f_k =
\langle f_1\wedge\cdots\wedge f_k| g_1\wedge\cdots\wedge g_k
\rangle.
\end{equation}
Moreover,
\begin{equation}
\imath_{g_1\vee\cdots\vee g_{k-1}}f_1\vee\cdots\vee f_k =
\frac{1}{k!}\sum_{j=1}^k \langle
f_1\vee\overset{\underset{\vee}{j}}{\cdots}\vee f_k|
g_1\vee\cdots\vee g_{k-1} \rangle f_j\,,
\end{equation}
and
\begin{equation}
\imath_{g_1\wedge\cdots\wedge g_{k-1}}f_1\wedge\cdots\wedge f_k =
\frac{1}{k!}\sum_{j=1}^k(-1)^{k-j} \langle
f_1\wedge\overset{\underset{\vee}{j}}{\cdots}\wedge f_k|
g_1\wedge\cdots\wedge g_{k-1} \rangle f_j,,
\end{equation}
where $\overset{\underset{\vee}{j}}{}$ stands for the omission.

\section{The S-rank and entanglement for multipartite Bose and Fermi systems}
\label{sec:c-rank}

There are many concepts of a rank of a tensor used in describing its
complexity. One of the simplest and most natural is the one based
on the inner product operators defined in the previous section. This rank, called in \cite{grabowski11} the \textit{S-rank} and used there to define the entanglement for systems of indistinguishable particles, is a
natural generalization of the {\it Schmidt rank} of 2-tensors.
\begin{definition}\label{def:c-rank}
Let $u\in\mathcal{H}^{\otimes k}$. By the {\it S-rank} of $u$, we
understand the maximum of dimensions
of the linear spaces $\imath_{\mathcal{H}}^{k-1}\sigma(u)$, for
$\sigma\in S_k$, which are the images of the contraction maps
\begin{equation}\label{1}
\mathcal{H}^{\otimes (k-1)}\ni \nu\mapsto
\imath_{\nu}\sigma(u)\in\mathcal{H}.
\end{equation}
Non-zero tensors of minimal S-rank in $\mathcal{H}^{\otimes k}$ (resp.,
$\mathcal{H}^{\vee k}$, $\mathcal{H}^{\wedge k}$) we will call
{\it simple} (resp., {\it simple symmetric, simple antisymmetric}).
\end{definition}

Note that he above definition has its natural counterpart for distinguishable particles, so tensors from $\cH_1\ot\cdots\ot\cH_k$.
We just do the contractions with tensors from $\cH_1\ot\cdots\ot\cH_{k-1}$ and the corresponding permutations.
If particles are identical, $\cH_i=\cH$, and indistinguishable, e.g. the tensors are symmetric or skew-symmetric, we can skip using permutations. In other words, for $u\in\mathcal{H}^{\vee k}$ (resp., $u\in\mathcal{H}^{\wedge
k}$), the S-rank of $u$ equals the dimension of
the linear space which is the image of the contraction map,
\begin{equation}\label{2}
\mathcal{H}^{\vee (k-1)}\ni \nu\mapsto
\imath_{\nu}u\in\mathcal{H},
\end{equation}
(resp.,
\begin{equation}\label{3}
\mathcal{H}^{\wedge(k-1)}\ni \nu\mapsto
\imath_{\nu}u\in\mathcal{H}).
\end{equation}

\begin{theo}\label{theo:simple} (\cite{grabowski11}) \
\begin{description}
\item{(a)} The minimal possible S-rank of a non-zero tensor
    $u\in\mathcal{H}^{\otimes k}$ equals $1$. A tensor
    $u\in\mathcal{H}^{\otimes k}$ is of S-rank $1$ if and only if $u$
    is decomposable, i.e., it can be written in the form
\begin{equation}\label{rank1}
u=f_1\otimes\cdots\otimes f_k, \quad f_i\in \mathcal{H},\quad
f_i\ne 0.
\end{equation}
Such tensors span $\mathcal{H}^{\otimes k}$. \item{(b)} The
minimal possible S-rank of a non-zero tensor
    $v\in\mathcal{H}^{\vee k}$ equals $1$. A tensor
    $v\in\mathcal{H}^{\vee k}$ is of S-rank $1$ if and only if $v$ can
    be written in the form
\begin{equation}\label{rank1a}
v=f\vee\cdots\vee f, \quad f\in \mathcal{H},\quad f\ne 0.
\end{equation}
Such tensors span $\mathcal{H}^{\vee k}$. \item{(c)} The minimal
possible S-rank of a non-zero tensor
    $w\in\mathcal{H}^{\wedge k}$ equals $k$. A tensor
    $w\in\mathcal{H}^{\wedge k}$ is of S-rank $k$ if and only if $w$
    can be written in the form
\begin{equation}\label{rank1b}
w=f_1\wedge\cdots\wedge f_k,
\end{equation}
where $f_1,\ldots,f_k\in\mathcal{H}$ are linearly independent.
Such tensors span $\mathcal{H}^{\wedge k}$.
\end{description}
In particular, the S-rank is 1 for simple and simple
symmetric tensors and it is $k$ for simple antisymmetric tensors
from $\cH^{\we k}$. Simple tensors have the form (\ref{rank1}),
simple symmetric tensors have the form (\ref{rank1a}), and simple
antisymmetric tensors have the form (\ref{rank1b}).
\end{theo}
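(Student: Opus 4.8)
The plan is to reduce all three parts to a single linear-algebra fact about the contraction maps of Definition \ref{def:c-rank}. For $u\in\cH^{\ot k}$ and a permutation $\s$, contracting the first $k-1$ factors of $\s(u)$ is the same as contracting every factor of $u$ except the $\s^{-1}(k)$-th, so the image of $\nu\mapsto\imath_\nu\s(u)$ is the $i$-th \emph{slot support} $W_i\subseteq\cH$ of $u$, namely the smallest subspace with $u\in\cH\ot\cdots\ot W_i\ot\cdots\ot\cH$ ($W_i$ in the $i$-th place). First I would verify this: viewing $\s(u)$ in $\cH^{\ot(k-1)}\ot\cH$ and pairing off the first $k-1$ legs exhibits $\nu\mapsto\imath_\nu\s(u)$ as, up to the anti-linearity of the Hermitian product, the flattening of $u$ into a matrix whose columns are indexed by the last leg; its image is the column space, which is exactly $W_i$, and $\dim W_i$ is the rank of that flattening. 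Letting $\s$ range over $S_k$ then gives, for every $u$, that its S-rank equals $\max_i\dim W_i$, the largest flattening rank.

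Part (a) follows at once. A nonzero $u$ has every $W_i\neq 0$, so the S-rank is $\geq 1$, and it equals $1$ precisely when each $W_i$ is a line. Using the standard fact that $\bigcap_i(\cH\ot\cdots\ot W_i\ot\cdots\ot\cH)=W_1\ot\cdots\ot W_k$, the condition $\dim W_i=1$ for all $i$ forces $u\in W_1\ot\cdots\ot W_k$ with $W_i=\C f_i$, i.e. $u=f_1\ot\cdots\ot f_k$; conversely a decomposable tensor has all slot supports equal to lines. The spanning claim is clear from the decomposable basis $e_{i_1}\ot\cdots\ot e_{i_k}$.

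The new ingredient in (b) and (c) — and the step I expect to be the main obstacle — is that the contraction is now restricted to symmetric (resp.\ antisymmetric) $\nu$, and I must show this restriction does not shrink the image. Since a fully symmetric $v$ pairs trivially with the non-symmetric part of any $\eta\in\cH^{\ot(k-1)}$, one has $\imath_\eta v=\imath_{\pi^\vee\eta}v$, so the symmetric contractions already sweep out the full slot support $W$; the antisymmetric case is identical with $\pi^\wedge$, signs being irrelevant to the image. Moreover the exchange symmetry of $v$ (resp.\ $w$) makes all slot supports coincide with a single $W$ and forces $v\in W^{\ot k}\cap\cH^{\vee k}=W^{\vee k}$ (resp.\ $w\in W^{\ot k}\cap\cH^{\wedge k}=W^{\wedge k}$). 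Hence the S-rank of $v$ (resp.\ $w$) equals $\dim W$ for this minimal $W$.

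With this in hand (b) and (c) are short. For (b), $\dim W=1$ means $v\in W^{\vee k}=\C\,f^{\vee k}$, i.e. $v=f\vee\cdots\vee f$, and the minimal S-rank $1$ is attained; the spanning assertion is the classical polarization identity, that $k$-th powers of linear forms span $\cH^{\vee k}\simeq\C[x_1,\ldots,x_n]_k$. For (c), $W^{\wedge k}=0$ as soon as $\dim W<k$, so a nonzero $w$ forces $\dim W\geq k$ and the minimal S-rank is $k$; and $\dim W=k$ makes $W^{\wedge k}$ one-dimensional, so $w$ is a scalar multiple of $f_1\wedge\cdots\wedge f_k$ for a basis $f_1,\ldots,f_k$ of $W$, which are therefore linearly independent. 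The converses and the spanning by $e_{i_1}\wedge\cdots\wedge e_{i_k}$ are immediate, and the concluding ``in particular'' statements are just (a)–(c) read through the definition of simple tensors as the nonzero tensors of minimal S-rank.
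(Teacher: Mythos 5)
This theorem is quoted from \cite{grabowski11} and the present paper contains no proof of it, so there is nothing in-text to compare your argument against; judged on its own, your proof is correct. The reduction of the S-rank to the maximal ``slot support'' dimension $\max_i\dim W_i$ (equivalently, the maximal rank of the one-leg flattenings of $u$) is sound, and you correctly isolate and dispose of the one genuinely delicate point, namely that restricting the contracting tensor $\nu$ to $\cH^{\vee(k-1)}$ (resp.\ $\cH^{\wedge(k-1)}$) does not shrink the image, because $\imath_{U_\tau\eta}v=\pm\,\imath_\eta v$ for $\tau\in S_{k-1}$ forces $\imath_\eta v=\imath_{\pi^\vee\eta}v$ (resp.\ $\imath_{\pi^\wedge\eta}w=\imath_\eta w$). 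The standard facts you invoke --- that $\bigcap_i(\cH\ot\cdots\ot W_i\ot\cdots\ot\cH)=W_1\ot\cdots\ot W_k$, that $\dim W^{\vee k}=1$ when $\dim W=1$, that $W^{\wedge k}=0$ for $\dim W<k$ and is a line for $\dim W=k$, and polarization for the spanning claim in (b) --- are all legitimate and carry the three parts uniformly. One immaterial slip: with the convention $U_\sigma(f_1\ot\cdots\ot f_k)=f_{\sigma(1)}\ot\cdots\ot f_{\sigma(k)}$ the surviving factor of $\sigma(u)$ is the $\sigma(k)$-th slot of $u$, not the $\sigma^{-1}(k)$-th, but since $\sigma$ ranges over all of $S_k$ this does not affect anything.
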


Using the concept of simple tensors we can define simple
(non-entangled or separable) and entangled pure states for
multipartite systems of bosons and fermions.
\begin{definition}\
\begin{description}

\item{(a)} A pure state $\rho_x$ on $\mathcal{H}^{\vee k}$ (resp.,
on $\mathcal{H}^{\we k}$), $\rho_x=\frac{\kb{x}{x}}{||x||^2}$,
with $x\in\mathcal{H}^{\vee k}$ (resp., $x\in\mathcal{H}^{\we
k}$), $x\ne 0$, is called a \emph{bosonic} (resp.,
\emph{fermionic}) {\it simple} (or {\it non-entangled}) {\it pure
state} if $x$ is a simple symmetric (resp., antisymmetric)
tensor. If $x$ is not simple symmetric (resp., antisymmetric), we
call $\rho_x$ a \emph{bosonic} (resp., \emph{fermionic})
\emph{entangled state}.

\item{(b)} A mixed state $\rho$ on $\mathcal{H}^{\vee k}$ (resp., on
$\mathcal{H}^{\we k}$) we call \emph{bosonic} (resp.,
\emph{fermionic}) {\it simple} (or {\it non-entangled}) {\it mixed
state} if it can be written as a convex combination of {bosonic}
(resp., {fermionic}) simple pure states. In the other case, $\rho$
is called \emph{bosonic} (resp., \emph{fermionic}) \emph{entangled
mixed state}.
\end{description}
\end{definition}

According to Theorem \ref{theo:simple}, bosonic simple pure
$k$-states are of the form
$$\kb{e{\vee}\cdots\vee e}{e{\vee}\cdots\vee e}$$
for unit vectors $e\in\cH$, and fermionic simple pure
$k$-states are of the form
$${k!}\kb{e_1\we\cdots\we e_k}{e_1\we\cdots\we e_k}$$ for
orthonormal systems $e_1,\dots,e_k$ in $\cH$.

Fixing a base in $\mathcal{H}$ results in defining coefficients
$[u^{i_1\ldots i_k}]$ of $u\in\mathcal{H}^{\otimes k}$. Formulae
characterizing simple tensors, thus simple pure states, can be
written in terms of quadratic equations with respect to these
coefficients as follows. The corresponding characterization of
entangled pure states are obtained by negation of the latter.

\begin{theo}\label{theo:quadratic1}(\cite{grabowski11}) \
\begin{description}

\item{(a)} The pure state $\rho_u$,
associated with a tensor $u=[u^{i_1\ldots
i_k}]\in\mathcal{H}^{\otimes k}$, is entangled if and only if
there exist $i_1,\ldots,i_k,j_1,\ldots,j_k$, and  $s=1,\ldots,k$
such that
\begin{equation}\label{quadratic}
u^{i_1\ldots i_s\ldots i_k}u^{j_1\ldots j_s\ldots j_k}\ne
u^{i_1\ldots j_s\ldots i_k}u^{j_1\ldots i_s\ldots j_k}.
\end{equation}

\item{(b)} The bosonic pure state $\rho_v$, associated with a symmetric
tensor $v=[v^{i_1\ldots i_k}]\in\mathcal{H}^{\vee k}$, is bosonic
entangled if and only if there exist $i_1,\ldots,i_k$,
$j_1,\ldots,j_k$, such that
\begin{equation}\label{quadratic2}
v^{i_1\ldots i_{k-1}i_k}v^{j_1\ldots j_{k-1}j_k}\ne v^{i_1\ldots
i_{k-1}j_k}v^{j_1\ldots j_{k-1}i_k}\,.
\end{equation}

\item{(c)} The fermionic pure state $\rho_w$, associated with an
antisymmetric tensor $w=[w^{i_1\ldots i_k}]\in\mathcal{H}^{\wedge
k}$, is fermionic entangled if and only if there exist
$i_1,\ldots,i_{k+1},j_1,\ldots,j_{k-1}$ such that
\begin{equation}\label{quadratic3}
w^{[i_1\ldots i_{k}}w^{i_{k+1}]j_1\ldots j_{k-1}}\ne 0\,,
\end{equation}
where the left-hand side is the antisymmetrization of $w^{i_1\ldots i_{k}}w^{i_{k+1}j_1\ldots j_{k-1}}$ with respect to the indices $i_1,\dots,i_{k+1}$.
\end{description}
\end{theo}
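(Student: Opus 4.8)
My plan is to derive all three statements from the decomposability criterion of Theorem~\ref{theo:simple}, which already identifies the non-entangled pure states: $\rho_u$ (resp.\ $\rho_v$, $\rho_w$) is non-entangled precisely when $u$ (resp.\ $v$, $w$) is simple, i.e.\ of the form (\ref{rank1}) (resp.\ (\ref{rank1a}), (\ref{rank1b})). Thus each equivalence reduces to re-expressing ``$u$ is simple'' as a system of quadratic equations in the coefficients, with the stated entanglement condition being exactly its negation. Throughout I will use that, by the very definition of the S-rank, the dimension of the image of a contraction $\nu\mapsto\imath_\nu(\s(u))$ is the rank of the matrix obtained by splitting the distinguished tensor slot off from the remaining ones (its \emph{flattening}); since contractions are built from the Hermitian product, they carry only a harmless complex conjugation of coefficients, affecting neither ranks nor linear (in)dependence.

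For (a), I would first note that as $\s$ runs over $S_k$ the flattening separating the $s$-th slot is realized by (\ref{1}), so the S-rank of $u$ is the maximum of the ranks of the $k$ flattenings. A direct inspection shows that (\ref{quadratic}), for fixed $s$, states precisely that some $2\times 2$ minor of the $s$-th flattening is non-zero: writing rows by the slot-$s$ index and columns by the remaining multi-index, the two sides of (\ref{quadratic}) are the two diagonal products of such a minor. Hence (\ref{quadratic}) fails for all indices and all $s$ if and only if every flattening has rank $\le 1$, i.e.\ the S-rank is $1$, i.e.\ (Theorem~\ref{theo:simple}(a)) $u$ is decomposable; negating gives the claim. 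Part (b) is the same argument restricted to symmetric tensors: since the coefficients $v^{i_1\ldots i_k}$ are totally symmetric, all $k$ flattenings have equal rank, so it suffices to test the single contraction (\ref{2}), whose matrix is the flattening appearing in (\ref{quadratic2}); vanishing of all its $2\times 2$ minors is equivalent to S-rank $1$, hence to $v=f\vee\cdots\vee f$ by Theorem~\ref{theo:simple}(b).

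Part (c) is the substantive case, where the quadratic conditions are the \emph{Pl\"ucker relations} rather than a rank-one condition. The plan is to introduce two subspaces of $\cH$: the contraction image $V_w=\{\imath_\nu w:\nu\in\cH^{\we(k-1)}\}$, whose dimension is the S-rank and hence is $\ge k$, with equality iff $w$ is decomposable (Theorem~\ref{theo:simple}(c)); and the annihilator $W=\{v\in\cH:v\we w=0\}$. A short monomial argument bounds $\dim W\le k$: were $v_1,\dots,v_{k+1}\in W$ independent, extending them to a basis and expanding $w$ in wedge monomials would force every monomial of $w$ to contain each $v_i$, impossible for a $k$-vector. The key identity is that, in coordinates, $(\imath_\nu w)\we w$ is exactly the antisymmetrized product on the left-hand side of (\ref{quadratic3}); consequently all relations (\ref{quadratic3}) hold iff $(\imath_\nu w)\we w=0$ for every $\nu$, i.e.\ iff $V_w\subseteq W$. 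Given this, both implications follow: if all (\ref{quadratic3}) vanish then $k\le\dim V_w\le\dim W\le k$, so $\dim V_w=k$ and $w$ is decomposable by Theorem~\ref{theo:simple}(c); conversely, for $w=f_1\we\cdots\we f_k$ one has $V_w=W=\mathrm{span}(f_1,\dots,f_k)$, so $V_w\subseteq W$ and every Pl\"ucker relation holds. Taking contrapositives yields the asserted equivalence between entanglement and the non-vanishing of some (\ref{quadratic3}).

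The main obstacle is the bookkeeping in part (c): verifying the coordinate identity that $(\imath_\nu w)\we w$ reproduces the antisymmetrization $w^{[i_1\ldots i_k}w^{i_{k+1}]j_1\ldots j_{k-1}}$ (expanding $e_m\we w$ and collecting the sign $(-1)^{l-1}$ from moving $e_m$ past the omitted factor), and then packaging the count $k\le\dim V_w\le\dim W\le k$ so as to invoke Theorem~\ref{theo:simple}(c). By contrast, parts (a) and (b) are routine once the S-rank is recognized as the maximal flattening rank and (\ref{quadratic}), (\ref{quadratic2}) are read as $2\times 2$ minors.
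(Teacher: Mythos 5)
The paper states Theorem \ref{theo:quadratic1} as a quoted result from \cite{grabowski11} and supplies no proof in this manuscript, so there is no internal proof to compare against; judged on its own terms, your argument is correct and is the natural one within the framework the paper sets up. In parts (a) and (b) you correctly observe that the S-rank is the maximum of the ranks of the $k$ one-slot flattenings (a single flattening sufficing in the symmetric case), that (\ref{quadratic}) and (\ref{quadratic2}) are exactly the nonvanishing of a $2\times 2$ minor of such a flattening, and that Theorem \ref{theo:simple}(a),(b) converts ``all minors vanish'' into decomposability. Part (c) is the standard Pl\"ucker-relations argument: the identity expressing the antisymmetrization in (\ref{quadratic3}) as the coefficients of $(\imath_{e_{j_1}\wedge\cdots\wedge e_{j_{k-1}}}w)\wedge w$, the bound $\dim W\le k$ for the annihilator $W=\{v:v\wedge w=0\}$, and the squeeze $k\le\dim V_w\le\dim W\le k$ (with Theorem \ref{theo:simple}(c) supplying both the lower bound and the final conclusion) fit together correctly, and the converse via $V_w=W=\mathrm{span}(f_1,\dots,f_k)$ is right. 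The only items a full write-up should make explicit are the sign and normalization bookkeeping in the coordinate identity for $(\imath_\nu w)\wedge w$ --- immaterial here since only vanishing is at stake, as you note --- and the remark that restricting $\nu$ to $\cH^{\vee(k-1)}$ (resp.\ $\cH^{\wedge(k-1)}$) does not shrink the contraction image for (anti)symmetric $u$, which justifies using the full flattening rank in (b) and (c).
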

Note that the opposite to (\ref{quadratic3}), $w^{[i_1\ldots i_{k}}w^{i_{k+1}]j_1\ldots j_{k-1}}= 0$, are sometimes called the {\it Pl\"ucker relations}.

\begin{example} Assume that deal with qubit systems, and $\ket{0},\ket{1}$ is an orthonormal basis in $\cH$. The tensor $u=\ket{0}\ot \ket{0}$ has the S-rank 1:
$$\imath_{a\ket{0}+b\ket{1}}\zs(u)=\imath_{a\ket{0}+b\ket{1}}(u)=(a\bra{0}+b\bra{1})\ket{0}\ket{0}=a\ket{0}$$
which is the 1-dimensional space spanned by $\ket{0}$,
while the tensor $u_{\pm}=\ket{0}\ot \ket{1}\pm \ket{1}\ot \ket{0}$ has the S-rank 2:
\beas\imath_{a\ket{0}+b\ket{1}}\zs(u)&=&\pm\imath_{a\ket{0}+b\ket{1}}u_\pm=\\&&\pm(a\bra{0}+b\bra{1})\ket{0}\ket{1}
\pm(a\bra{0}+b\bra{1})\ket{1}\ket{0}=\pm a\ket{1}\pm b\ket{0}\,.\eeas
\end{example}

\begin{example} For the GHZ-states $\ket{GHZ_k}$ and W-states $\ket{W_k}$ the S-rank is 2 independently on $k\ge 2$. Indeed, it is clear that contractions of
$$\ket{GHZ_k}=\frac{1}{\sqrt{2}}\left(\ket{0}^{\ot k}+\ket{1}^{\ot k}\right)$$
with $(k-1)$-tensors give all linear combinations of $\ket{0}$ and $\ket{1}$. The same is true for
$$\ket{W_k}=\frac{1}{\sqrt{k}}\left(\ket{0\cdots01}+\ket{0\cdots10}+\cdots+\ket{1\cdots00}\right)\,.$$
On the other hand, the S-rank cannot exceed 2 for qubit systems.

\begin{remark} The S-rank does not distinguish between $\ket{GHZ_k}$ and $\ket{W_k}$. Note however that we can slightly generalize our notion of the S-rank including also contractions $\zi_\zn u$ with shorter tensors, i.e. tensors $\zn\in\cH^{\ot(k-l)}$ with $0<l<k$.
We have not insist on this generalization in order to avoid additional technical complications. The simple version of the S-rank is sufficient for distinguishing simple tensors. This extended version could be useful in measuring the entanglement.
\end{remark}

\end{example}

\begin{example} If $\ket{i_1}$ and $\ket{i_2}$ are orthonormal sets in $\cH_1$ and $\cH_2$, then the S-rank of $u\sum_{i=1}^r\zl_i\ket{i_1}\ot\ket{i_2}$ is $r$, as $\zi_{\cH_1}u$ is spanned by $\ket{i_2}$, $i=1,\dots,r$, and $\zi_{\cH_2}\zs(u)$ for the transposition $\zs(f\ot g)=g\ot f$ is spanned by $\ket{i_1}$, $i=1,\dots,r$. In other words the S-rank equals the Schmidt rank in this case, so the S-rank is a natural generalization of the latter.
\end{example}

\section{Entanglement and Segre maps for Bose and Fermi statistics}
\label{sec:segre}

Similarly to the case of distinguishable particles (see
\cite{grabowski05,grabowski06}), the sets of all bosonic simple
pure states (resp., fermionic simple pure states) can be described as
the images of certain maps defined on the products of projective
Hilbert spaces, \emph{the generalized Segre maps}, as follows.

Consider first the standard Segre embedding $\operatorname{Seg}_k$
induced by the tensor product map:

\begin{equation}\label{segre}
\xymatrix  @C=7pt { {(\mathcal{H}_\circ)^{\times k}\ar[d]} &{\ni}&
{(x_1,\ldots,x_k)\ar@{|->}[d]}
 \ar@{|->}[rr] &&
  {\ x_1\otimes\cdots\otimes x_k} \ar@{|->}[d] &{\in} & \left(\mathcal{H}^{\otimes k}\right)_\circ
\ar[d]
 \\
{(\mathbb{P}\mathcal{H})^{\times k}}&{\ni}&
{(\rho_{x_1},\ldots,\rho_{x_k})}
\ar@{|->}[rr]^(.6)/-6pt/{\operatorname{Seg}_k} && {\
\rho_{x_1\otimes\ldots\otimes x_k}} &{\in} &
\mathbb{P}(\mathcal{H}^{\otimes k}) }
\end{equation}
where $\mathcal{H}_\circ=\mathcal{H}\setminus\{ 0\}$.

It is clear that the analogous map, $\operatorname{Seg}_k^\vee$,
for the Bose statistics should be
\begin{equation}\label{segre-boson}
\xymatrix  @C=7pt { {\mathcal{H}_\circ\ar[d]} &{\ni}&
{x\ar@{|->}[d]}
 \ar@{|->}[rrr] &&&
  {\ x^k} \ar@{|->}[d] &{\in} & \left(\mathcal{H}^{\vee k}\right)_\circ
\ar[d]
 \\
{\mathbb{P}\mathcal{H}}&{\ni}& {\rho_{x}}
\ar@{|->}[rrr]^(.6)/-6pt/{\operatorname{Seg}_k^\vee} &&& {\
\rho_{x^k}} &{\in} & \mathbb{P}(\mathcal{H}^{\vee k}) }
\end{equation}
where $x^k=x\vee\cdots\vee x=x\otimes\cdots\otimes x$ ($k$-factors), and for the
Fermi statistics:
\begin{equation}\label{segre-fermion}
\xymatrix  @C=7pt { {\mathcal{H}^{\times k}_\circ\ar[d]} &{\ni}&
{(x_1,\ldots,x_k)\ar@{|->}[d]}
 \ar@{|->}[rrr] &&&
  {\ x_1\wedge\cdots\wedge x_k} \ar@{|->}[d] &{\in} & \left(\mathcal{H}^{\wedge k}\right)_\circ
\ar[d]
 \\
{(\mathbb{P}\mathcal{H})^{\times k}_\circ}&{\ni}&
{(\rho_{x_1},\ldots,\rho_{x_k})}
\ar@{|->}[rrr]^(.6)/-6pt/{\operatorname{Seg}_k^\wedge} &&& {\
\rho_{x_1\wedge\ldots\wedge x_k}} &{\in} &
\mathbb{P}(\mathcal{H}^{\wedge k}) }
\end{equation}
where $\mathcal{H}^{\times k}_\circ$ denotes
$$\mathcal{H}^{\times k}\setminus\{(x_1,\dots,x_k):x_1\wedge\cdots\wedge x_k=0\}$$
and $(\mathbb{P}\mathcal{H})^{\times k}_\circ$ is
$$(\mathbb{P}\mathcal{H})^{\times k}\setminus\{(\rho_{x_1},\dots,\rho_{x_k}):x_1\wedge\cdots\wedge x_k=0\}\,.$$
Note that the condition $x_1\wedge\cdots\wedge x_k\ne 0$ does not
depend on the choice of the vectors $x_1,\dots,x_k$ in their
projective classes and means that $\rho_{x_1},\dots,\rho_{x_k}$ do
not lie in a common projective hyperspace. The subset
$\mathcal{H}^{\times k}_\circ$ (resp.,
$(\mathbb{P}\mathcal{H})^{\times k}_\circ$) is open and dense in
$\mathcal{H}^{\times k}$ (resp., $(\mathbb{P}\mathcal{H})^{\times
k}$). The following is an immediate consequence of Theorem \ref{theo:simple}.
\begin{theo} A bosonic (fermionic) pure state $\rho\in \mathbb{P}(\mathcal{H}^{\vee k})$
(resp., $\rho\in \mathbb{P}(\mathcal{H}^{\we k})$) is entangled if
and only if it lies outside the range of the Segre map
$${\operatorname{Seg}_k^\vee}:{\mathbb{P}\mathcal{H}} \raa \mathbb{P}(\mathcal{H}^{\vee
k})\quad (\text{\rm resp.,}\quad
{\operatorname{Seg}_k^\we}:{(\mathbb{P}\mathcal{H})^{\times
k}_\circ} \raa \mathbb{P}(\mathcal{H}^{\we k})\, )\,.
$$
A mixed bosonic (fermionic) state is entangled if and only if it lies outside the convex hull of the range of the corresponding Segre map.
\end{theo}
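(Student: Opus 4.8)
The plan is to show that the theorem is nothing more than a translation of Theorem~\ref{theo:simple} through the definitions of (non\nobreakdash-)entangled states and of the Segre maps; no machinery beyond unwinding these is required. First I would identify the range of each Segre map with the corresponding set of simple pure states. By construction, the image of $\operatorname{Seg}_k^\vee$ consists precisely of the states $\rho_{x^k}$ with $x\in\mathcal{H}\setminus\{0\}$, where $x^k=x\vee\cdots\vee x$; by Theorem~\ref{theo:simple}(b) these are exactly the projectivizations of the simple symmetric tensors (those of S\nobreakdash-rank $1$). Likewise, the image of $\operatorname{Seg}_k^\wedge$ consists of the states $\rho_{x_1\wedge\cdots\wedge x_k}$ with $(x_1,\dots,x_k)\in\mathcal{H}^{\times k}_\circ$; here the defining condition $x_1\wedge\cdots\wedge x_k\ne 0$ is equivalent to linear independence of $x_1,\dots,x_k$, so by Theorem~\ref{theo:simple}(c) this image is exactly the set of projectivized simple antisymmetric tensors (those of minimal S\nobreakdash-rank $k$). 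In each case one needs only to check that the underlying tensor is nonzero, so that $\rho$ is a well-defined pure state: this holds because $\nm{x^k}=\nm{x}^k\ne 0$ and because $\mathcal{H}^{\times k}_\circ$ is defined precisely by $x_1\wedge\cdots\wedge x_k\ne 0$.

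The pure-state assertion then follows at once. By definition a bosonic (fermionic) pure state $\rho$ is non-entangled precisely when the underlying tensor is simple symmetric (antisymmetric), which by the previous paragraph is precisely the condition that $\rho$ lie in the range of $\operatorname{Seg}_k^\vee$ (resp.\ $\operatorname{Seg}_k^\wedge$). Negating this equivalence yields the stated characterization of entanglement. For the mixed-state assertion I would again invoke the definition: a mixed bosonic (fermionic) state is non-entangled exactly when it is a convex combination of simple pure states. Since the set of all finite convex combinations of a subset is by definition its convex hull, and since the simple pure states form the range of the relevant Segre map, the non-entangled mixed states are exactly the convex hull of that range; negation gives the claim.

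There is essentially no hard obstacle, the statement being a direct consequence of Theorem~\ref{theo:simple}. The only points deserving care are the fermionic equivalence between $x_1\wedge\cdots\wedge x_k\ne 0$ and linear independence of $x_1,\dots,x_k$ (and the remark that allowing arbitrary linearly independent frames, rather than only orthonormal ones, changes neither the ray nor the span, since one may orthonormalize within the spanned subspace). If one further wishes to connect the mixed-state picture to the convex-roof entanglement measures via Proposition~\ref{hullt}, it is worth observing that both ranges are compact: the bosonic range is the continuous image of the compact space $\mathbb{P}\mathcal{H}$, while the fermionic range, although its domain $(\mathbb{P}\mathcal{H})^{\times k}_\circ$ is only open and dense, coincides with the Pl\"ucker-embedded Grassmannian of $k$-dimensional subspaces of $\mathcal{H}$ and is therefore compact. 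Compactness guarantees that these convex hulls are closed and that the set of simple pure states forms the extreme-point set of its own convex hull, so that Proposition~\ref{hullt} and its Corollary apply verbatim.
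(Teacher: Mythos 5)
Your proposal is correct and follows exactly the route the paper intends: the paper itself gives no argument beyond declaring the theorem ``an immediate consequence of Theorem \ref{theo:simple}'', and your write-up simply makes that explicit by identifying the ranges of $\operatorname{Seg}_k^\vee$ and $\operatorname{Seg}_k^\wedge$ with the sets of simple symmetric and simple antisymmetric pure states via parts (b) and (c) of that theorem, then unwinding the definitions of entangled pure and mixed states. The additional observations on compactness of the ranges (and the Grassmannian picture for the fermionic case) go beyond what the paper records but are accurate and harmless.
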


\section{Entanglement for generalized parastatistics}
\label{sec:para}

Our approach to the entanglement of composite systems for
identical particles is so general and natural that it allows for
an immediate implications also for generalized parastatistics.
Parastatistics were introduced by Green \cite{green53} as a refinement of the spin-statistics connection introduced by Pauli \cite{pauli40}. Green was motivated by a two-pages paper by Wigner addressing the connection between equations of motion and the commutation relations (Wigner's problem) \cite{wigner50,manko97}. The context of Green's paper is Quantum Field Theory, while most of the applications which have been proposed deal with thermodynamical aspects, in particular with the calculation of the partition function.
In this paper we are concerned with parastatistics only to show that our proposed scheme applies to all the situations where states of composite systems are by construction not factorizable and are associated with representations of the permutation group acting on the tensor product of states of the subsystems. The reader who wants to know more about parastatistics could read \cite{ohnuki82,messiah64,cusson69}.

Observe first that simple tensors of length $1$ in
$\tilde{\mathcal{H}}=\mathcal{H}_1\otimes\cdots\otimes\mathcal{H}_k$
form an orbit of the group $U(\mathcal{H}_1)\times\cdots\times
U(\mathcal{H}_k)$ acting on $\tilde{\mathcal{H}}$ in the obvious
way. In fact, each such tensor can be written as
$e_1^1\otimes\cdots\otimes e_1^k$ for certain choice of
orthonormal bases $e_1^j,\dots,e_{n_j}^j$ in $\mathcal{H}_j$,
$j=1,\ldots,k$. This means that simple tensors are just vectors of
highest (or lowest, depending on the convention) weight of the
compact Lie group $U(\mathcal{H}_1)\times\cdots\times
U(\mathcal{H}_{k})$ relative to some choice of a maximal
torus and Borel subgroups. If indistinguishable particles are
concerned, the symmetric and antisymmetric tensors form particular irreducible parts of the
`diagonal' representation of the compact group $U(\mathcal{H})$ in
the Hilbert space $\mathcal{H}^{\otimes k}$, defined by
\begin{equation}\label{trep}
U(x_1\otimes\cdots\otimes x_k)=U(x_1)\otimes\cdots\otimes U(x_k).
\end{equation}
Recall that we identify the symmetry group $S_k$ with the group of
certain unitary operators on the Hilbert space $\mathcal{H}^k$ in
the obvious way,
$$\sigma(x_1\otimes\cdots\otimes x_k)=x_{\sigma(1)}\otimes\cdots\otimes x_{\sigma(k)}\,.$$
Note that the operators of $S_k$ intertwine the unitary action of
$U(\mathcal{H})$. In the cases of the symmetric and antisymmetric
tensors, we speak about the Bose and Fermi statistics, respectively.
But, for $k>2$, there are other irreducible parts of the
representation (\ref{trep}) associated with invariant subspaces
of the $S_k$-action. We shall call them {\it (generalized)
parastatistics}. Any of these $k$-parastatistics (i.e. any
irreducible subspace of the tensor product $\mathcal{H}^{\otimes
k}$) is associated with a {\it Young tableau} $\alpha$ with
$k$-boxes (chambers) as follows (see e.g. \cite{fulton91,fulton97, Zhel}).

Consider partitions of $k$: $k=\lambda_1+\cdots+\lambda_r$, where
$\lambda_1\ge\cdots\ge\lambda_r\ge 1$. To a partition
$\lambda=(\lambda_1,\dots,\lambda_r)$ is associated a {\it Young
diagram} (sometimes called a {\it Young frame} or a {\it Ferrers
diagram}) with $\lambda_i$ boxes in the $i$th row, the rows of
boxes lined up on the left. Define a {\it tableau} on a given
Young diagram to be a numbering of the boxes by the integers
$1,\dots,k$, and denote with $Y_\lambda$ the set of all such Young
tableaux. Finally, put $Y(k)$ to be the set of all Young tableaux
with $k$ boxes. Given a tableau $\alpha\in Y(k)$ define two
subgroups in the symmetry group $S_k$:
$$P=P_\alpha=\{\sigma\in S_k: \sigma \text{\ preserves each row of}\ \alpha\}$$
and
$$Q=Q_\alpha=\{\tau\in S_k: \tau \text{\ preserves each column of}\ \alpha\}\,.$$
In the space of linear operators on $\mathcal{H}^{\ot k}$ we
introduce two operators associated with these subgroups:
\begin{equation}\label{yoper}
a_\alpha=\sum_{\tau
\in Q}(-1)^{\tau}\tau\,,\quad b_\alpha=\sum_{\sigma \in P}\sigma\,.
\end{equation}
Finally, we define the {\it Young symmetrizer}
\begin{equation}\label{ysym}
c_\alpha=a_\alpha\circ b_\alpha=\sum_{\sigma \in P,\, \tau\in
Q}(-1)^{\tau}\tau\circ\sigma\,.
\end{equation}
It is well known (see, e.g., \cite{fulton91,fulton97,Zhel}) that $\pi^\alpha=\frac{1}{\mu(\alpha)}c_\alpha$,
for some non-zero rational number $\mu(\alpha)$, is an orthogonal
projector and that the image $\mathcal{H}^\alpha$ of $c_\alpha$ is
an irreducible subrepresentation of $U(\mathcal{H})$, i.e. the
{\it parastatistics associated with} $\alpha$. As a matter of fact,
these representations for Young tableaux on the same Young diagram
are equivalent, so that the constant $\mu(\alpha)$ depends only on
the Young diagram $\lambda$ of $\alpha$ and does not depend on the
enumeration of boxes. Hence, $\mu(\alpha)=\mu(\lambda)$ and this number is related
to the multiplicity $m(\lambda)$ of the corresponding irreducible
representation in $\mathcal{H}^{\ot k}$ by $\mu(\lambda)\cdot
m(\lambda)=k!$. For a given Young diagram (partition) $\lambda$,
the map
\begin{equation}\label{cysym}
\epsilon_\lambda=\frac{1}{\mu(\lambda)^2}\sum_{\alpha\in
Y_\lambda} c_\alpha
\end{equation}
is an orthogonal projection, called the {\it central Young
symmetrizer}, onto the invariant subspace being the sum of all
copies of the irreducible representations equivalent to that with
a parastatistics from $Y_\lambda$.

The symmetrization $\pi^\vee$ (antisymmetrization $\pi^\wedge$)
projection corresponds to a Young tableau with just one row (one
column) and arbitrary enumeration. It is well known that any
irreducible representation $\mathcal{H}^\alpha$ of
$U(\mathcal{H})$ contains cyclic vectors which are of highest
weight relative to some choice of a maximal torus and Borel
subgroups in $U(\mathcal{H})$. We will call them
\textit{$\alpha$-simple tensors} or \textit{simple tensors in
$\mathcal{H}^\alpha$}. Note that such vectors can be viewed as
\textit{generalized coherent states} \cite{perelomov86}. They can be
also regarded as the `most classical' states with respect to their correlation properties \cite{kb09}. These are exactly the tensors associated with simple
(non-entangled) pure states for composite systems of particles
with (generalized) parastatistics. This is because $\alpha$-simple
tensors represent the minimal amount of quantum correlations for
tensors in $\mathcal{H}^\alpha$, namely the quantum correlations forced
directly by the particular parastatistics.

\begin{example}\label{ex}
\begin{description}
\item{(a)} For $k=2$ we have just the obvious splitting of
    $\mathcal{H}^{\otimes 2}$ into symmetric and antisymmetric tensors:
    $\mathcal{H}^{\wedge 2}\oplus\mathcal{H}^{\vee 2}$.
\item{(b)} For $k=3$, besides symmetric and antisymmetric tensors associated with the Young tableaux
$$\Yvcentermath1 \alpha_0=\young(123) \quad \text{and} \quad
\alpha_3=\young(1,2,3)\,,$$
we have two additional irreducible parts associated
    with the Young tableaux
\end{description}
\begin{equation}\label{}
\Yvcentermath1 \alpha_1=\young(12,3) \quad and \quad
\alpha_2=\young(13,2)\ ,
\end{equation}
namely
\begin{equation}\label{}
\mathcal{H}^{\otimes 3}=\mathcal{H}^{\wedge 3}\oplus
\mathcal{H}^{\alpha_1}\oplus\mathcal{H}^{\alpha_2}\oplus
\mathcal{H}^{\vee 3}\,.
\end{equation}
Since $P_{\za_1}=\{ id, (1,2)\}$ and $Q_{\za_1}=\{ id, (1,3)\}$, we get
$a_{\za_1}=id-(1,3)$ and $b_{\za_1}=id+(1,2)$, so that
$$c_{\za_1}=a_{\za_1}\circ b_{\za_1}=\left(id-(1,3)\right)\circ\left(id+(1,2)\right)
=id+(1,2)-(1,3)-(123)\,.$$
As the multiplicity of the representation is 2, we have $\zm(\za_1)=3!/2=3$ and the projection
\begin{equation}\label{}
\pi^{\alpha_1}:\mathcal{H}^{\otimes
3}\rightarrow\mathcal{H}^{\alpha_1},
\end{equation}
takes the form
\begin{equation}
\pi^{\alpha_1}(x_1\otimes x_2\otimes x_3)=\frac{1}{3} (x_1\otimes
x_2\otimes x_3 +x_2\otimes x_1\otimes x_3 -x_3\otimes x_2\otimes
x_1 -x_3\otimes x_1\otimes x_2)\,.
\end{equation}
Similarly,
\begin{equation}\label{}
\pi^{\alpha_2}:\mathcal{H}^{\otimes
3}\rightarrow\mathcal{H}^{\alpha_2},
\end{equation}
\begin{equation}
\pi^{\alpha_2}(x_1\otimes x_2\otimes x_3)=\frac{1}{3} (x_1\otimes
x_2\otimes x_3 +x_3\otimes x_2\otimes x_1 -x_2\otimes x_1\otimes
x_3 -x_2\otimes x_3\otimes x_1).
\end{equation}
\end{example}
The simple tensors (the highest weight vectors) in
$\mathcal{H}^{\alpha_1}$ can be written as
\begin{equation}\label{}
v^{\alpha_1}_\lambda=\lambda(e_1\otimes e_1\otimes e_2-e_2\otimes
e_1 \otimes e_1),
\end{equation}
for certain choice of an orthonormal basis $e_i$ in $\mathcal{H}$
and $\lambda\ne 0$. Analogously, the simple tensors in
$\mathcal{H}^{\alpha_2}$, in turn, take the form
\begin{equation}\label{}
v^{\alpha_2}_\lambda=\lambda(e_1\otimes e_2\otimes e_1-e_2\otimes
e_1 \otimes e_1).
\end{equation}
For $\text{dim}(\mathcal{H})=3$, the simple tensors of length $1$
form an orbit of the unitary group $U(\mathcal{H})$ of the (real)
dimension $7$ in $\mathcal{H}^{\alpha_1}$ and
$\mathcal{H}^{\alpha_2}$. The simple symmetric tensors of length
$1$ form an orbit of the dimension $5$, and the simple
antisymmetric ones (of length $1$) -- an orbit of the dimension 1.
The dimensions of the irreducible representations are:
$\text{dim}(\mathcal{H}^{\we 3})=1$,\
$\text{dim}(\mathcal{H}^{\vee 3})=10$,\
$\text{dim}(\mathcal{H}^{\alpha_1})=\text{dim}(\mathcal{H}^{\alpha_2})=8$\,.

\medskip
A fundamental observation is that $\alpha$-simple tensors can also be characterized in terms of the S-rank.
\begin{theo}\label{crucial} A tensor $v\in\cH^\alpha$, $\alpha\in Y(k)$, is simple if and only if it has the minimal S-rank among non-zero tensors from $\cH^\alpha$. This minimal S-rank equals the number $r$ of rows
in the corresponding Young diagram and the simple tensor reads as
\be\label{a-simple} v=\pi^\alpha\left(e_{\alpha(1)}\otimes\cdots\otimes
e_{\alpha(k)}\right)\,,
\ee
where $e_1,\dots,e_r$ are some linearly independent vectors in $\cH$ and $\alpha(i)$ is the number of the row in which the box with the number $i$ appears in the tableaux
$\alpha$. In other words, the tensor
\be\label{a-simple1}
E_{\alpha}=e_{\alpha(1)}\otimes\cdots\otimes e_{\alpha(k)}
\ee
is the tensor product of $k$ vectors
from the sequence $E=(e_1,\dots,e_r)$ obtained by putting $e_j$ in the places indicated
by the number of the boxes in the $j$th row.
\end{theo}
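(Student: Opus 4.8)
The plan is to combine the representation-theoretic description of $\alpha$-simple tensors (highest-weight vectors) with the contraction characterization of the S-rank from Definition \ref{def:c-rank} and Theorem \ref{theo:simple}. First I would recall that, by the earlier discussion, $\alpha$-simple tensors are exactly the cyclic highest-weight vectors in the irreducible representation $\mathcal{H}^\alpha$, and that any such vector is obtained from a standard choice of basis. The claim is that, up to the unitary action of $U(\mathcal{H})$ and a scalar, every simple tensor has the explicit form \eqref{a-simple}, $v=\pi^\alpha(E_\alpha)$ with $E_\alpha=e_{\alpha(1)}\otimes\cdots\otimes e_{\alpha(k)}$. So the first step is to verify directly that $\pi^\alpha(E_\alpha)$ is a nonzero highest-weight vector: applying the raising operators of $u(\mathcal{H})$ (with respect to the chosen maximal torus and Borel subgroup adapted to the ordered basis $e_1,\dots,e_r$) annihilates it, because within each column of $\alpha$ the indices are strictly increasing while within each row they repeat, and the Young symmetrizer $c_\alpha$ antisymmetrizes columns and symmetrizes rows. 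This identifies $\pi^\alpha(E_\alpha)$ with the standard polytabloid, which is the highest-weight generator of $\mathcal{H}^\alpha$.

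Next I would compute the S-rank of $v=\pi^\alpha(E_\alpha)$ and show it equals $r$, the number of rows. By Definition \ref{def:c-rank}, the S-rank is the maximum over $\sigma\in S_k$ of $\dim\imath_{\mathcal{H}}^{k-1}\sigma(v)$, i.e.\ the dimension of the space of all $(k-1)$-fold contractions of $\sigma(v)$ landing in $\mathcal{H}$. Since $E_\alpha$ is built only from $e_1,\dots,e_r$, every contraction of any $\sigma(v)$ is a linear combination of $e_1,\dots,e_r$, so the S-rank is at most $r$. For the lower bound I would exhibit, for each $j=1,\dots,r$, a $(k-1)$-tensor $\nu_j$ (a product of basis vectors $e_i$ chosen to match all slots except one carrying $e_j$) whose contraction against a suitable permutation of $v$ produces $e_j$ up to a nonzero scalar; the strict increase of indices down columns guarantees these contractions are nondegenerate and that the resulting vectors $e_1,\dots,e_r$ are linearly independent. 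Hence the S-rank is exactly $r$, matching the antisymmetric case (one column, $r=k$) and the symmetric case (one row, $r=1$) of Theorem \ref{theo:simple} as sanity checks.

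The converse — that minimal S-rank $r$ forces $v$ to be $\alpha$-simple — is where the main obstacle lies. Here I would argue that the S-rank is $U(\mathcal{H})$-invariant and lower semicontinuous on $\mathcal{H}^\alpha$, so the set of tensors of minimal S-rank is a nonempty $U(\mathcal{H})$-invariant subvariety; since $\mathcal{H}^\alpha$ is irreducible, its lowest-dimensional (hence closed) $U(\mathcal{H})$-orbit consists precisely of the highest-weight vectors, and these realize the minimal S-rank by the computation above. The delicate point is ruling out that some non-highest-weight tensor could accidentally achieve S-rank $r$ as well; for this I would show that if all contractions of $v$ and of all $\sigma(v)$ span a space of dimension only $r$, then $v$ is supported, after a unitary change of basis, on an $r$-dimensional subspace $W\subset\mathcal{H}$, so $v\in W^{\otimes k}\cap\mathcal{H}^\alpha$. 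Restricting to $\dim W=r$ and using that $\mathcal{H}^\alpha$ for $\dim W=r$ has highest-weight vectors as its unique minimal-rank (lowest-dimensional) orbit — equivalently, invoking the quadratic Plücker-type relations of Theorem \ref{theo:quadratic1} adapted to general $\alpha$ — pins $v$ down to the orbit of $\pi^\alpha(E_\alpha)$. I expect the support-reduction lemma and the identification of the minimal orbit to be the technical heart of the argument.
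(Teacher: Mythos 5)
Your forward direction matches the paper's: both verify that $\pi^\alpha(E_\alpha)$ is a non-zero highest-weight vector killed by the raising operators, bound its S-rank above by $r$ because the tensor is built only from $e_1,\dots,e_r$, and bound it below by $r$ via explicit contractions (the paper does this through the linear independence of $\{\tau(E_\alpha):\tau\in Q_\alpha\}$, whose last slots run over the $e_j$). Your semicontinuity/closed-orbit argument is a legitimate, and arguably cleaner, substitute for the paper's way of showing that the minimal S-rank on $\cH^\alpha\setminus\{0\}$ equals $r$ and is attained on the highest-weight orbit: the paper instead proves directly that every non-zero $v\in\cH^\alpha$ is skew-symmetric in the $r$ tensor slots indexed by the first column of $\alpha$ (using $\epsilon_\lambda(v)=v$), so that by Theorem \ref{theo:simple}(c) its contractions already span a space of dimension at least $r$.

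The genuine gap is in the uniqueness half of the converse, i.e.\ that \emph{only} highest-weight vectors attain S-rank $r$. Your support-reduction step already needs an argument (the contraction images of the various $\sigma(v)$ are spans of different tensor slots, each of dimension at most $r$, and you must show they coincide before concluding $v\in W^{\otimes k}$ with $\dim W=r$), but the decisive problem is the last step. You reduce to $v\in c_\alpha(W^{\otimes k})\cong S^\lambda(W)$ with $\dim W=r$ and assert that there the highest-weight vectors are singled out as ``the unique minimal-rank orbit,'' to be certified by ``Pl\"ucker-type relations adapted to general $\alpha$.'' No such relations exist in the paper --- Theorem \ref{theo:quadratic1} treats only the fully symmetric and antisymmetric cases --- and the underlying claim is false as stated: for the shape $(3,1)$ and $\dim W=2$ one has $S^\lambda(W)\cong\det\otimes\,\mathrm{Sym}^2(\C^2)$, and an element corresponding to $\det\otimes\,xy$ lies in $W^{\otimes 4}\cap\cH^\alpha$ without being a highest-weight vector, since $xy$ is not the square of a linear form. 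So ``supported on an $r$-dimensional subspace'' does not by itself pin $v$ to the highest-weight orbit, and your sketch supplies no mechanism to exclude such tensors. The paper closes this step by a different device: it exploits the symmetry of elements of $\cH^\alpha$ under row-preserving and first-column permutations to force every monomial $e_{s(1)}\otimes\cdots\otimes e_{s(k)}$ occurring in $v$ to be constant along the rows of $\alpha$, whence $v$ is proportional to $\pi^\alpha(E_\alpha)$. That combinatorial constraint --- not an orbit-dimension count --- is the ingredient your argument is missing, and without it the proof of the ``only if'' direction does not go through.
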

\begin{proof} Assume that $v$ is of the form (\ref{a-simple}). Passing to the complexification $GL(\cH)$ of $U(\cH)$ and using a basis $e_1,\dots,e_n$ extending the linear independent family $e_1,\dots,e_r\in\cH$ to identify $GL(\cH)$ with $GL(n;\C)$, we easily see that $E_\alpha$, thus $v$, is an eigenvector for any diagonal matrix and is killed by any upper-triangular matrix. Moreover, $v\ne 0$, hence $v$ is a vector of highest weight. Indeed, by definition $b_\alpha(E_\alpha)$ is non-zero and proportional to $E_\alpha$, so that $v=t\cdot a_\alpha(E_\alpha)$ for a non-zero constant $t$. It is now easy to see that $\{\tau(E_\alpha):\tau\in Q\}$ is a family of linearly independent tensors in $\cH^{\ot k}$, so that $v=\sum_{\tau
\in Q}(-1)^{\tau}\tau(E_\alpha)$ is non-zero. As tensors from $\{\tau(E_\alpha):\tau\in Q\}$ are linearly independent and have one of $e_1,\dots,e_r$ as the last factor, the S-rank of $v$ is at least $r$. On the other hand, $v$ is composed from tensor products of $e_1,\dots,e_r$ only, so the S-rank is at most $r$.

Conversely, let $v\in\cH^\alpha$, $v\ne 0$. Without loss of generality we can assume that the numbers in the first column of the Young tableaux $\alpha$ are $k-r+1,\dots, k$. Since $\epsilon_\lambda(v)=v$, we have
$$v=\frac{1}{\mu(\lambda)^2}\sum_{\varsigma\in S_r}(-1)^\varsigma\varsigma\left(\sum_{\tau\in Q'}(-1)^\tau\tau(b_\alpha(v))\right)\,,$$
where
$$Q'=\{\tau\in Q: \tau \text{\ is identical on the first column of}\ \alpha\}$$
and $S_r$ is the permutation group of $\{ k-r+1,\dots, k\}$. This means that $v$ is skew-symmetric with respect to the last $r$ positions, so the contractions $i_\nu v$, with $\nu\in(\cH)^{\otimes(k-1)}$, can be written as contractions of a skew-symmetric $r$-tensor, thus they span a vector space of dimension $\ge r$ (see Theorem \ref{theo:simple} (c)). If this dimension is exactly $r$, then $v$ can be written as a combination of linearly independent tensor products $e_{s(1)}\otimes\cdots\otimes e_{s(k)}$ of vectors from $\{e_1,\dots,e_r\}$.
Since the tensor is symmetric with respect to permutations preserving rows and skew-symmetric with respect to permutations of the first column, each tensor product $e_{s(1)}\otimes\cdots\otimes e_{s(k)}$ in this combination should satisfy
$e_{s(i)}=e_{s_(j)}$ if $i$ and $j$ are in the same row of $\alpha$. Hence, $v$ is proportional to $\pi^\alpha(E_\alpha)$.

\end{proof}

Let $\mathcal{H}^\alpha\subset\mathcal{H}^{\otimes k}$ denotes the
irreducible component of the tensor representation of the unitary
group $U(\mathcal{H})$ in $\mathcal{H}^{\otimes k}$ associated
with a Young diagram $\alpha\in Y(k)$.
\begin{definition}\

\begin{description}
\item{(a)} We say that a pure state $\rho_v$ on $\mathcal{H}^{\otimes k}$ {\it
obeys a parastatistics} $\alpha\in Y(k)$ (is a {\it pure
$\alpha$-state} for short) if $v\in\mathcal{H}^\alpha$, i.e. $\rho$ is a pure state on the Hilbert space
$\cH^\alpha$.
\item{(b)} A pure state $\rho$ on $\mathcal{H}^{\otimes k}$
obeying a parastatistics $\alpha$ is called a \emph{simple pure
state for the parastatistics} $\alpha$ (\emph{simple pure
$\alpha$-state} for short) if $\rho$ is represented by an
$\alpha$-simple tensor in $\mathcal{H}^\alpha$. If $\rho$ is not
simple $\alpha$-state, we call it an \emph{entangled pure
$\alpha$-state}.
\item{(c)} A mixed state $\rho$ on $\mathcal{H}^{\alpha}$ we call
a \emph{simple (mixed) state for the parastatistics} $\alpha$
(\emph{simple $\alpha$-state} for short), if it can be written as a
convex combination of  simple pure $\alpha$-states. In the other
case, $\rho$ is called an \emph{entangled mixed $\alpha$-state}.
\end{description}
\end{definition}

\section{Segre maps for generalized parastatistics} \label{sec:segrepara}

In general, for an arbitrary parastatistics (Young tableau)
$\alpha\in Y(k)$ with the partition (Young diagram)
$\lambda=(\lambda_1,\dots,\lambda_r)$, we define the generalized
Segre map $\operatorname{Seg}^\alpha$ ({\it $\alpha$-Segre map})
as a map
$\operatorname{Seg}^\alpha:(\mathbb{P}\mathcal{H})^{\times
r}_\circ\rightarrow \mathbb{P}(\mathcal{H}^\alpha)$ described as
follows.

Let us consider first the map
$$i_\alpha:\mathcal{H}^{\times r}\rightarrow \mathcal{H}^{\otimes k}\,,
\quad (x_1,\dots,x_r)\mapsto x_{\alpha(1)}\otimes\cdots\otimes
x_{\alpha(k)}\,,$$ where $\alpha(i)$ is the number of the row in
which the box with the number $i$ appears in the tableaux
$\alpha$. In other words, we make a tensor product of $k$ vectors
from $\{ x_1,\dots,x_r\}$ by putting $x_j$ in the places indicated
by the number of the boxes in the $j$th row. For instance, the
Young tableaux from Example \ref{ex} give
$i_{\alpha_1}(x_1,x_2)=x_1\otimes x_1\otimes x_2$ and
$i_{\alpha_2}(x_1,x_2)=x_1\otimes x_2\otimes x_1$. It is clear
that $i_\alpha(x_1,\dots,x_r)$ is an eigenvector of $b_\alpha$.

The Segre map $\operatorname{Seg}^\alpha$ associates with
$(\rho_{x_1},\ldots,\rho_{x_r})\in(\mathbb{P}\mathcal{H})^{\times
r}_\circ$ the pure state
$\rho_{\pi^\alpha(x_{\alpha(1)}\otimes\cdots\otimes
x_{\alpha(k)})}$ in $\mathcal{H}^{\alpha}$, as shows the following
diagram:
\begin{equation}\label{segre-general}
\xymatrix  @C=7pt { {\mathcal{H}^{\times r}_\circ\ar[d]} &{\ni}&
(x_1,\ldots,x_r)\ar@{|->}[d]\ar@{|->}[rrr]^(.6)/-25pt/{\pi^\alpha\circ
i_\alpha}
  &&&
    \pi^\alpha(x_{\alpha(1)}\otimes\cdots\otimes x_{\alpha(k)})\ar@{|->}[d]&{\in}&\mathcal{H}^\alpha_0
\ar[d]
 \\
{(\mathbb{P}\mathcal{H})^{\times r}_\circ}&{\ni}&
(\rho_{x_1},\ldots,\rho_{x_r})
\ar@{|->}[rrr]^(.6)/-15pt/{\operatorname{Seg}^\alpha} &&&
\rho_{\pi^\alpha(x_{\alpha(1)}\otimes\cdots\otimes x_{\alpha(k)})}
&{\in} & \mathbb{P}(\mathcal{H}^{\alpha}) }
\end{equation}
Note that $\pi^\alpha(x_{\alpha(1)}\otimes\cdots\otimes
x_{\alpha(k)})$ is proportional to the antisymmetrization of the
tensor $x_{\alpha(1)}\otimes\cdots\otimes x_{\alpha(k)}$ and that
the construction is correct, since
$\pi^\alpha(x_{\alpha(1)}\otimes\cdots\otimes x_{\alpha(k)})$ is
non-zero if and only if $x_1\wedge\dots\wedge x_r\ne 0$, and its
projective class is uniquely determined by the projective classes
of $x_1,\dots,x_r$. Note also that we can always take
$x_1,\dots,x_r$ orthogonal, say, $x_1=e_1,\dots,x_r=e_r$, since
the antisymmetrization kills the part of $x_i$ which is the
orthogonal projection of $x_i$ onto the linear subspace spanned by
the rest of the vectors $x_j$. Now, according to Theorem \ref{crucial},
$\pi^\alpha(x_{\alpha(1)}\otimes\cdots\otimes x_{\alpha(k)})$ is
$\alpha$-simple, so that
$\rho_{\pi^\alpha(x_{\alpha(1)}\otimes\cdots\otimes
x_{\alpha(k)})}$ is a simple pure $\alpha$-state.
Moreover, each simple pure $\alpha$-state is of this form and for symmetric and antisymmetric tensors this
construction agrees with (\ref{segre-boson}) and
(\ref{segre-fermion}). We therefore get the following.
\begin{theo} A pure $\alpha$-state $\rho\in \mathbb{P}(\mathcal{H}^\alpha)$
is an entangled $\alpha$-state if and only if it is not in the
range of the Segre map
$$\operatorname{Seg}^\alpha:(\mathbb{P}\mathcal{H})^{\times r}_\circ \raa
\mathbb{P}(\mathcal{H}^\alpha)\,.
$$
The set $\cS(\cH^\alpha)$ of mixed non-entangled $\alpha$-states
is the convex hull of the range of the normalized
$\alpha$-Segre map,
$$\cS(\cH^\alpha)=conv\left({\operatorname{Seg}^\alpha}\left(\cP(\cH)^{\times
r}_\alpha\right)\right)\,,
$$
and mixed entangled $\alpha$-states are exactly members of
$$\cD(\cH^\alpha)\setminus \cS(\cH^\alpha)\,.
$$
\end{theo}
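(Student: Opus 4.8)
The plan is to derive all three assertions from Theorem \ref{crucial} and the construction of $\operatorname{Seg}^\alpha$ given just before the statement, so that the only genuinely new content lies in the passage from pure to mixed states.

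First I would record what the construction already supplies. By Theorem \ref{crucial}, for every $(\rho_{x_1},\dots,\rho_{x_r})\in(\mathbb{P}\cH)^{\times r}_\circ$ the tensor $\pi^\alpha(x_{\alpha(1)}\ot\cdots\ot x_{\alpha(k)})$ is $\alpha$-simple and non-zero, and conversely every $\alpha$-simple tensor has this form (one may take the $x_j$ orthonormal, since the column antisymmetrization annihilates the component of each $x_i$ lying in the span of the remaining vectors, as already noted above). Hence the range of $\operatorname{Seg}^\alpha$ is exactly the set of simple pure $\alpha$-states. The first assertion is then immediate from the definition of an entangled pure $\alpha$-state: $\rho\in\mathbb{P}(\cH^\alpha)$ is simple precisely when it is represented by an $\alpha$-simple tensor, hence precisely when it lies in the range of $\operatorname{Seg}^\alpha$, and it is entangled in exactly the complementary case.

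For the mixed-state statement I would establish the two inclusions that together give $\cS(\cH^\alpha)=\operatorname{conv}(\operatorname{Seg}^\alpha(\cP(\cH)^{\times r}_\alpha))$. That $\cS(\cH^\alpha)$ is contained in the right-hand side is immediate: rank-one arguments reproduce every simple pure $\alpha$-state inside the image of the normalized map, and by definition $\cS(\cH^\alpha)$ is the convex hull of those states. The reverse inclusion is the substantive point: I must check that the normalized $\alpha$-Segre map carries the whole domain $\cP(\cH)^{\times r}_\alpha$ into $\cS(\cH^\alpha)$, i.e. that feeding in genuinely mixed density operators $A_1,\dots,A_r$ still yields a non-entangled $\alpha$-state. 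Since the $r$ arguments enter through disjoint blocks of tensor slots (the blocks being the rows of $\alpha$), I would expand each $A_j$ spectrally and distribute the map across the rows; the only obstruction is that within the $j$-th row the dependence on $A_j$ is of degree $\lambda_j$ rather than linear, so the cross terms placing distinct eigenvectors in one row must be reabsorbed. I would dispose of this using the bosonic separability of symmetric powers, namely that $\pi^\vee A_j^{\ot\lambda_j}\pi^\vee$ is always a convex combination of coherent projectors $\rho_{z^{\vee\lambda_j}}$ (provable by a $U(\cH)$-averaging argument within the row, since $\pi^\alpha$ already symmetrizes each row), and then recombine these row-wise decompositions through the column antisymmetrization to land inside $\cS(\cH^\alpha)$. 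I expect this inclusion to be the main obstacle; everything else is formal. I note that if $\cP(\cH)^{\times r}_\alpha$ is read as ranging only over pure (rank-one) arguments, then this step collapses and $\cS(\cH^\alpha)=\operatorname{conv}(\text{range of }\operatorname{Seg}^\alpha)$ follows at once from the definition of a simple mixed $\alpha$-state.

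The third assertion is then purely a matter of the definitions together with closedness. A mixed $\alpha$-state is entangled exactly when it fails to be simple, i.e. when it is not in $\cS(\cH^\alpha)$; since every $\alpha$-state lies in $\cD(\cH^\alpha)$, the entangled mixed states are precisely $\cD(\cH^\alpha)\setminus\cS(\cH^\alpha)$. To make this set difference well posed I would observe that the set of simple pure $\alpha$-states is compact, being the closed $U(\cH)$-orbit of the highest-weight lines in $\mathbb{P}(\cH^\alpha)$ (a projective subvariety), so that its convex hull $\cS(\cH^\alpha)$ is a compact convex set and Proposition \ref{hullt} together with the Krein--Milman theorem applies verbatim.
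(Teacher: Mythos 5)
Your argument for the theorem as the paper intends it is correct and is essentially the paper's own: the paper gives no separate proof, but derives the statement directly from the paragraph preceding it, where Theorem \ref{crucial} is invoked to show that the range of $\operatorname{Seg}^\alpha$ is precisely the set of simple pure $\alpha$-states; all three assertions then follow from the definitions of simple and entangled pure and mixed $\alpha$-states, exactly as in your first and last paragraphs. Concerning your contingent middle branch: despite the notation $\cP(\cH)^{\times r}_\alpha$, the ``normalized $\alpha$-Segre map'' is still the map on pure-state (rank-one) inputs --- the extension to genuinely mixed arguments is the ``big $\alpha$-Segre map'' introduced only \emph{after} the theorem, whose study the paper explicitly postpones to a separate paper --- so your alternative reading is the intended one and the convex-hull identity is definitional. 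For what it is worth, had the mixed-input reading been intended, the lemma you lean on (that $\pi^\vee A^{\otimes\lambda}\pi^\vee$ lies in the cone generated by the coherent projectors $\rho_{z^{\otimes\lambda}}$) is true, but not by the $U(\cH)$-averaging you sketch: averaging over $U(\cH)$ destroys all dependence on $A$ by Schur's lemma and only recovers the case $A\propto I$. The correct route is a phase/Gaussian average over $z=\sum_i c_i e^{i\theta_i}e_i$ in an eigenbasis of $A$ (positivity of the $P$-representation of thermal Gaussian states), and the subsequent recombination through the column antisymmetrizer would also need care, since $\pi^\alpha$ does not factor through the row symmetrizers in a self-adjoint way. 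None of this affects the correctness of your proof of the theorem as stated.
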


Let us observe that
$$\Vert\pi^\alpha(x_{\alpha(1)}\otimes\cdots\otimes
x_{\alpha(k)})\Vert^2\cdot\rho_{\pi^\alpha(x_{\alpha(1)}\otimes\cdots\otimes
x_{\alpha(k)})}=\pi^\alpha\circ\left(\rho_{x_{\alpha(1)}}\otimes\cdots\otimes
\rho_{x_{\alpha(k)}}\right)\circ\pi^\alpha\,,
$$ as operators on $\cH^{\otimes k}$. Indeed,
since $\pi^\alpha$ is an orthogonal projection, the left-hand side
equals
\begin{eqnarray*}
&\kb{\pi^\alpha(x_{\alpha(1)}\otimes\cdots\otimes
x_{\alpha(k)})}{\pi^\alpha(x_{\alpha(1)}\otimes\cdots\otimes
x_{\alpha(k)})}(y)=\\
&\pi^\alpha(x_{\alpha(1)}\otimes\cdots\otimes
x_{\alpha(k)})\bk{x_{\alpha(1)}\otimes\cdots\otimes
x_{\alpha(k)}}{\pi_\alpha(y)}=\left(\pi^\alpha\circ\left(\rho_{x_{\alpha(1)}}\otimes\cdots\otimes
\rho_{x_{\alpha(k)}}\right)\circ\pi^\alpha\right)(y)\,.
\end{eqnarray*}
This suggests to look at the map (the {\it big $\alpha$-Segre
map}):
$$\wt{\operatorname{Seg}^\alpha}:(\mathfrak{u}^*(\cH))^{\times
r}\raa \mathfrak{u}^*(\cH^\alpha)\subset
\mathfrak{u}^*(\cH^{\otimes k})\,,
$$
where $\mathfrak{u}^*(\cH)$ denotes the (real) vector space of
selfadjoint operators on $\cH$, defined by
$$\wt{\operatorname{Seg}^\alpha}(u_1,\dots,u_r)=\pi_\alpha\circ(u_{\alpha(1)}\otimes\cdots\otimes
u_{\alpha(k)})\circ\pi_\alpha\,.
$$


The big $\alpha$-Segre map is a natural generalization
of the map (\ref{imb}). A closer study of the big Segre maps we postpone to a separate paper.

\section{Conclusions}
The presented geometric description, in terms of Segre maps, of entanglement properties for systems with arbitrary statistics parallels our previous algebraic approach to such systems based on the concept of S-rank of a tensor. It puts on equal and unifying footing systems of distinguishable particles, for which both algebraic and geometric descriptions were known, and systems with indistinguishable particles. What is more, this description provides an explicit form of simple (separable) pure states for an arbitrary parastatistics an effective procedures to check the simplicity in the bosonic and the fermionic case. Such procedures for arbitrary parastatistics are not known for us and can be the subject of forthcoming papers. Also the problem of the decomposition of the total algebra of operators, associated with the S-rank, is an interesting and open question.

\section*{Acknowledgments}
Research of the first two authors was financed by the Polish Ministry of Science and Higher Education under the grant No. N N202 090239. G.~Marmo would like to acknowledge the support provided by the Santander/UCIIIM chair of Excellence programme 2011-2012. We are also indebted to referees for their useful comments and suggestions.


\end{document}